\newcommand{\qedsymbol}{$\hfill\blacksquare$}
\newcommand{\old}[1]{{}}
\begin{document}
\title{Algorithm and Hardness results on Liar's Dominating Set and $\boldmath{k}$-tuple Dominating Set}
\author{Sandip Banerjee\inst{1},
Sujoy Bhore\inst{2}}
\institute{Department of Computer Science, Hebrew University of Jerusalem, Israel\\\email{sandip.ndp@gmail.com},\and
Algorithms and Complexity Group, Technische Universit\"{a}t Wien, Austria  \\\email{sujoy@ac.tuwien.ac.at}
}
\maketitle
\vspace{-0.1 in}

\pagestyle{plain}
\pagenumbering{arabic}

\begin{abstract}

Given a graph $G=(V,E)$, the dominating set problem asks for a minimum subset of vertices $D\subseteq V$ 
such that every vertex $u\in V\setminus D$ is adjacent to at least one vertex $v\in D$. 
That is, the set $D$ satisfies the condition that $|N[v]\cap D|\geq 1$ for each $v\in V$,
where $N[v]$ is the closed neighborhood of $v$.
In this paper, we study two variants of the classical dominating set problem: $\boldmath{k}$-tuple dominating set ($k$-DS) problem and 
Liar's dominating set (LDS) problem, and obtain several algorithmic and hardness results. 

\vspace{.2cm}

On the algorithmic side, we present a constant factor ($\frac{11}{2}$)-approximation algorithm for the Liar's dominating set problem on unit disk graphs. 
Then, we obtain a PTAS for the $\boldmath{k}$-tuple dominating set problem on unit disk graphs.
On the hardness side, we show a $\Omega (n^2)$ bits lower bound for the space complexity of any (randomized) streaming algorithm for Liar's dominating set problem as well as for the $\boldmath{k}$-tuple dominating set problem. 
Furthermore, we prove that the Liar's dominating set problem on bipartite graphs is W[2]-hard.
\end{abstract}

\section{Introduction}
The dominating set problem is regarded as one of the fundamental problems in theoretical computer science
which finds its applications in various fields of science and engineering \cite{HSH98,GC98}.
A \textit{dominating set} of a graph $G=(V,E)$ is a subset $D$ of $V$
such that every vertex in $V\setminus D$ is adjacent to at least one vertex in $D$.
The \textit{domination number}, denoted as $\gamma(G)$, is the minimum cardinality of a dominating set of $G$.
Garey and Johnson \cite{garey2002computers} showed that deciding whether a given graph has
domination number at most some given integer $k$ is NP-complete.
For a vertex $v\in V$, the open neighborhood of the vertex $v$ denoted as $N_{G}(v)$ is defined as $N_G(v)=\{u|(u,v)\in E\}$
and the closed neighborhood of the vertex $v$ denoted as $N_G[v]$ is
defined as $N_G[v]=N_G(v)\cup \{v\}$.

\newpage

\paragraph{\textbf{\underline{$\boldmath{k}$-tuple Dominating Set ($\boldmath{k}$-DS)}}:} Fink and Jacobson~\cite{FJ85} generalized
the concept of dominating sets as follows.
\vspace{.1cm}
\begin{tcolorbox}[colback=blue!3!white]
	\noindent {\bf{\color{black} \textit{{\emph{$\boldmath{k}$-tuple Dominating Set (\textit{$\boldmath{k}$-DS}) Problem}}}}}\\
	{\bf Input:} A graph $G=(V,E)$ and a non-negative integer $k$.\\
	{\bf Goal:} Choose a minimum cardinality subset of vertices $D\subseteq V$ such that,
	for every vertex $v\in V$, $|N_G[v]\cap D|\geq k$.
\end{tcolorbox}

The \textit{$\boldmath{k}$-tuple domination number} $\gamma_{k}(G)$ is the minimum cardinality of a $\boldmath{k}$-DS of $G$.
A good survey on $\boldmath{k}$-DS can be found in \cite{CFHV12,LC03}.
Note that, $1$-tuple dominating set is the usual dominating set, and 
$2$-tuple and $3$-tuple dominating set are known as \textit{double dominating set}~\cite{FH00}
and \textit{triple dominating set}~\cite{RV07}, respectively. 
Further note that, for a graph $G=(V,E)$, $\gamma_{k}(G)=\infty$, if there exists no $\boldmath{k}$-DS of $G$.
Klasing et al. \cite{KL04} studied the $\boldmath{k}$-DS problem from hardness and approximation point of view.
They gave a $(\log |V|+1)$-approximation algorithm for the $k$-tuple domination problem in general graphs,
and showed that it cannot be approximated within a ratio of $(1-\epsilon)\log |V|$, for any $\epsilon>0$.

\vspace{-.5cm}

\paragraph{\textbf{\underline{Liar's Dominating Set (LDS)}:}}
Slater \cite{PJS09} introduced a variant of the dominating set problem called
Liar's dominating set problem. Given a graph $G=(V,E)$, in this problem the objective is to
choose minimum number of vertices $D\subseteq V$ such that each vertex $v\in V$ is double dominated and
for every two vertices $u,v\in V$ there are at least three vertices in $D$ from the union of their neighborhood set.
The LDS problem is an important theoretical model for the following real-world problem. 
Consider a large computer network where a virus (generated elsewhere in the Internet) can attack any of the processors in the network.
The network can be viewed as an unweighted graph. 
For each vertex $v\in G$, an anti-virus can: $(1)$ detect the virus at $v$ as well as in its closed neighborhood $N[v]$,
and $(2)$ find and report the vertex $u\in N[v]$ at which the virus is located.
Notice that, one can make network $G$ virus free by deploying the anti-virus at the vertices $v\in D$,
where $D$ is the minimum size \textit{dominating set}.
However, in certain situations the anti-viruses may fail.
Hence to make the system virus free it is likely to double-guard the nodes of the network, which is indeed the $2$-tuple DS.
But, if the anti-viruses detect the location of the viruses but fails to cure it properly 
due to some software error or corrupted circumstances. 
In this scenario, LDS serves the purpose. We define the problem formally below.

\vspace{.1cm}
\begin{tcolorbox}[colback=blue!3!white]
	\noindent {\bf{\color{black} \textit{{\emph{Liar's Dominating Set (\textit{LDS}) Problem}}}}}\\
	{\bf Input:} A graph $G=(V,E)$ and a non-negative integer $k$.\\
	{\bf Goal:} Choose a subset of vertices $L\subseteq V$ of minimum cardinality such that
\begin{itemize}
\item for every vertex $v\in V$, $|N_G[v]\cap L|\geq 2$,
\item for every pair of vertices $u, v\in V$ of distinct vertices $|(N_G[u]\cup N_G[v])\cap L|\ge 3$.
\end{itemize}
\end{tcolorbox}

\subsection{Our Results}
In this paper, we obtain seveal algorithmic and hardness results for LDS and $\boldmath{k}$-DS problems on various graph families. 
On the algorithmic side in section~\ref{algo}, we present a constant factor ($\frac{11}{2}$)-approximation algorithm for the Liar's dominating set (LDS) problem on unit disk graphs. Then, we obtain a PTAS for the $\boldmath{k}$-tuple dominating set ($\boldmath{k}$-DS) problem on unit disk graphs. On the hardness side in section~\ref{hardness}, we show a $\Omega (n^2)$ bits lower bound for the space complexity of any (randomized) streaming algorithm for Liar's dominating set problem as well as for the $\boldmath{k}$-tuple dominating set problem. Furthermore, we prove that the Liar's dominating set problem on bipartite graphs is W[2]-hard.

\section{Algorithmic Results}\label{algo}

\subsection{Approximation Algorithm for LDS on Unit Disk Graphs}\label{approx_const}
A unit disk graph (UDG) is an intersection graph  of a family of unit radius disks in the plane.
Formally, given a collection $\mathcal C=\{C_1,C_2,\ldots,C_n\}$ of $n$ unit disks in the plane,
a UDG is defined as a graph $G=(V,E)$, where each vertex $u\in V$ corresponds to a disk $C_i\in \mathcal C$
and there is an edge $(u,v)\in E$ between two vertices $u$ and $v$ if and only if their corresponding
disks $C_u$ and $C_v$ contain $v$ and $u$, respectively. Here, we consider the geometric variant of LDS known as
Euclidean Liar's dominating set problem, which is defined as follows:

\vspace{.2cm}

\begin{tcolorbox}[colback=blue!3!white]
	\noindent {\bf{\color{black} \textit{{\emph{Liar's Dominating Set on UDG (\textit{LDS-UDG}) Problem}}}}}\\
	{\bf Input:} A unit disk graph $G=(\mathcal P, E)$, where $\mathcal P$ is a set of $n$ disk centers.\\
	{\bf Output:} A minimum size subset $D\subseteq \mathcal P$ such that 
 \begin{itemize}
  \item for each point $p_i\in \mathcal P$, $|N[p_i]\cap D| \geq 2$.
  \item for each pair of points $p_i,p_j\in \mathcal P$, $|(N[p_i]\cup N[p_j])\cap D|\geq 3$.
 \end{itemize}
\end{tcolorbox}

Jallu et al. \cite{jallu2018liar} studied the LDS problem on unit disk graphs, and proved that this problem is NP-complete. Furthermore, given an unit disk graph $G=(V,E)$
and an $\epsilon>0$, they have designed a $(1+\epsilon)$-factor 
approximation algorithm to find an LDS in $G$ with running time 
$n^{O(c^2)}$, where $c=O(\frac{1}{\epsilon}\log \frac{1}{\epsilon})$.
In this section, we design a $\frac{11}{2}$-factor approximation algorithm that runs in sub-quadratic time.

For a point $p\in \mathcal P$, let $C(p)$ denote the disk centered at the point $p$.
For any two points $p,q\in \mathcal P$, if $q\in C(p)$, 
then we say that $q$ is a neighbor of $p$ (sometimes we say $q$ is covered by $p$) and vice versa.
Since for any Liar's dominating set $D$, $|(N[p_i]\cup N[p_j])\cap D|\geq 3$ ($\forall  p_i,p_j\in \mathcal P$) holds,
we assume that $|\mathcal P|\geq 3$ and for all the points $p\in \mathcal P$, $|N[p]|\geq 2$.
For a point $p_i\in \mathcal P$, let $p_i(x)$ and $p_i(y)$ denote the $x$ and $y$ coordinates of $p_i$, respectively.
Let $\text{Cov}_{\frac{1}{2}}(C(p_i))$, $\text{Cov}_1(C(p_i))$ ,$\text{Cov}_{\frac{3}{2}}(C(p_i))$ $\subseteq P$ denote the set of points
inside the circle centered at $p_i$ and of radius $\frac{1}{2}$, $1$ and $\frac{3}{2}$ unit, respectively.

The basic idea of our algorithm is as follows. 
Initially, we sort the points of $\mathcal P$ based on their $x$-coordinates.
Now, consider the leftmost point (say $p_i$).
We compute the sets $\text{Cov}_{\frac{1}{2}}(C(p_i))$, $\text{Cov}_1(C(p_i))$ and $\text{Cov}_{\frac{3}{2}}(C(p_i))$.
Next, we compute the set $Q=\text{Cov}_{\frac{3}{2}}(C(p_i))\setminus \text{Cov}_{\frac{1}{2}}(C(p_i))$.
Further, for each point $q_i\in Q$, we compute the set $S(q_i)=\text{Cov}_{1}(C(q_i))\cap \text{Cov}_{\frac{1}{2}}(C(p_i))$.
Finally, we compute the set $S=\bigcup S(q_i)$.
Moreover, our algorithm is divided into two cases.

\paragraph{Case~1 ($S\ne \emptyset$):}
For each point $q_i\in Q$ such that $S(q_i)\neq \emptyset$, we further distinguish between the following cases.

\begin{enumerate}
\item If $|S(q_i)|\geq 3$: we pick two arbitrary points from the set $S(q_i)$, and include them in the output set $D$ (see Figure~\ref{approx-case-1}(a)).

\item If $|S(q_i)|\leq 2$: in this case, we choose those two (or one) points (say $p_a,p_b\in S(q_i)$) in the output set $D$ 
(see Figure~\ref{approx-case-1}(b)).
\end{enumerate}

Once these points are selected, we remove the remaining points from 
$\text{Cov}_1(C(q_i))$ at this step from the set $Q$.
Notice that the points that lie in $\text{Cov}_1(C(q_i))$ are already $1$-dominated. 
Later, we can pick those points if required. 
However, observe that we may choose only $1$ point from $\text{Cov}_{\frac{1}{2}}(C(p_i))$ while we are in the case of $|S(q_i)|\leq 2$.
So we maintain a counter $t$ in Case~1. 
This counter keeps track of how many points we are picking from the set $S(q_i)$ in total, for each point $q_i\in Q$. 
If $t$ is at least $2$, we simply add $p_{i}$ to the output set and do not enter into Case~2. 
Otherwise, we proceed to Case~2. 

\vspace{2cm}
\begin{figure}[tbp]
 \centering
 \includegraphics[scale=.60]{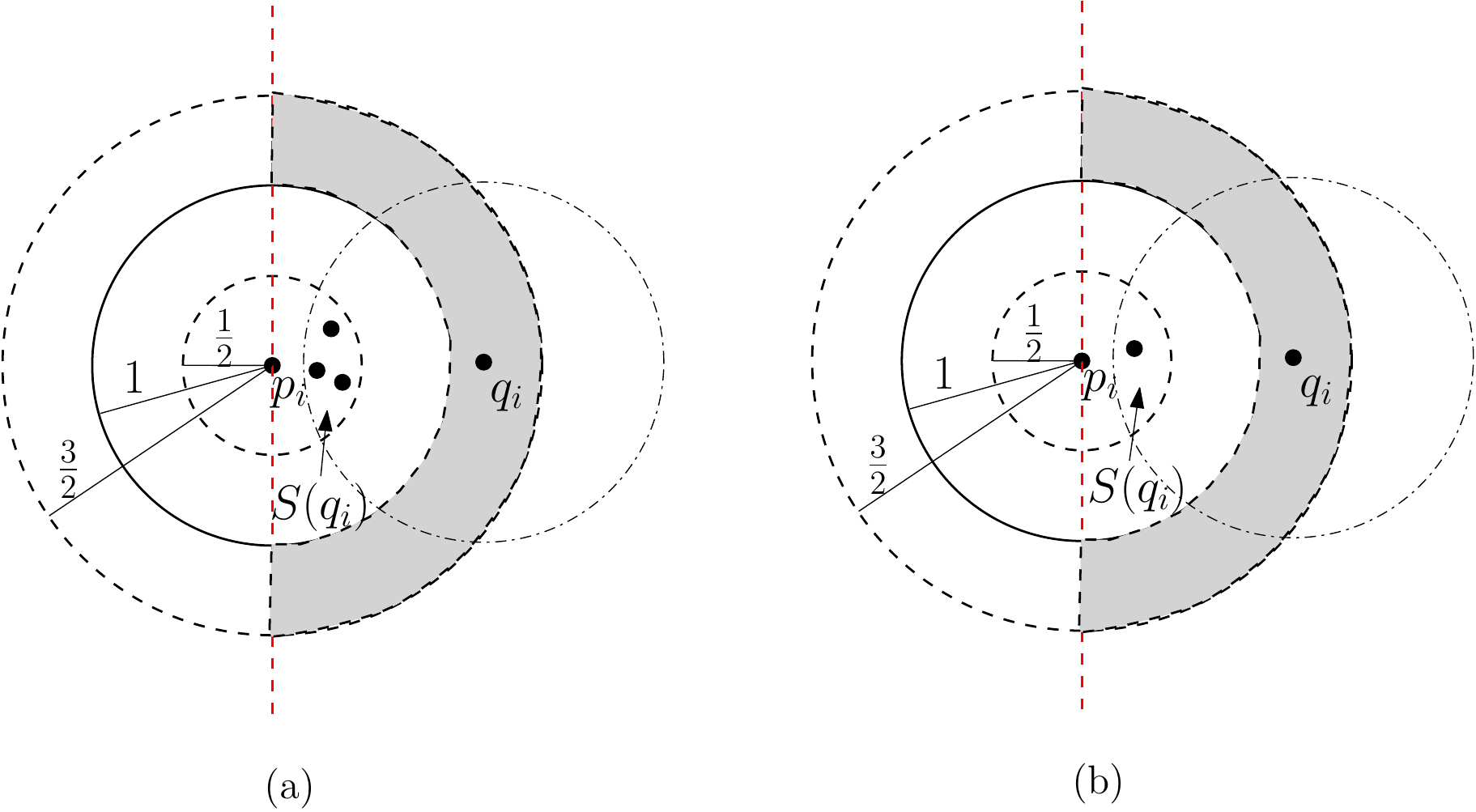}\vspace{-0.1in}
 \caption{An illustration of Case~1; (a) $|S(q_i)|\geq 3$,  (b) $|S(q_i)|\leq 2$.} \vspace{-0.1in}
 \label{approx-case-1}
\end{figure}

\newpage

\paragraph{Case~2 ($S=\emptyset$ or $t<2$):}
here, we further distinguish between the following cases.
\vspace{-4cm}
\begin{enumerate}
\item If $|\text{Cov}_{\frac{1}{2}}(C(p_i))|\geq 3$: then we choose $2$ points arbitrarily in the output set $D$ (see Figure~\ref{approx-case-2}(a)).
\vspace{-1cm}
\item If $|\text{Cov}_{\frac{1}{2}}(C(p_i))| = 2$: let $p_i,p_x\in \text{Cov}_{\frac{1}{2}}(C(p_i))$ be these points. 
We include both of them in the output set $D$.
This settles the first condition of LDS for them.
However, in order to  fulfill the second condition of LDS for $p_i$ and $p_x$, we must include at least one extra point here. 
First, we check the cardinality of $X=(\text{Cov}_1(C(p_i))\cap \text{Cov}_1(C(p_x)))\setminus \{p_i,p_x\}$.
If $|X|\neq \emptyset$, then we pick an arbitrary point $p_m$ from $X$, and include $p_m$ in $D$.
Otherwise, we include two points $p_l\in \text{Cov}_1(C(p_i))$ and $p_r\in \text{Cov}_1(C(p_x))$, and include them in $D$ (see Figure~\ref{approx-case-2}(b)).
Note that, in this case, we know that $p_l$ and $p_r$ exist due to the input constraint of an LDS problem. 
\vspace{-1cm}
\item If $|\text{Cov}_{\frac{1}{2}}(C(p_i))|=1$: then we check $\text{Cov}_1(C(p_i))$ and include two points
$p_m$ and $p_n$ arbitrarily from $\text{Cov}_1(C(p_i))\setminus \text{Cov}_{\frac{1}{2}}(C(p_i))$ (see Figure~\ref{approx-case-2}(c)).
\end{enumerate}
\vspace{-2cm}
This fulfills the criteria of LDS of points in $\text{Cov}_{\frac{1}{2}}(C(p_i))$.
Then we delete the remaining points of $\text{Cov}_{\frac{1}{2}}(C(p_i))$ from $\mathcal P$. Next we select the left-most point from the remaining  and repeat the same procedure until $\mathcal P$ is empty. The pseudocode is given in Algorithm \ref{alg:ours}.

\begin{figure}[tbp]
 \centering
 \includegraphics[scale=0.70]{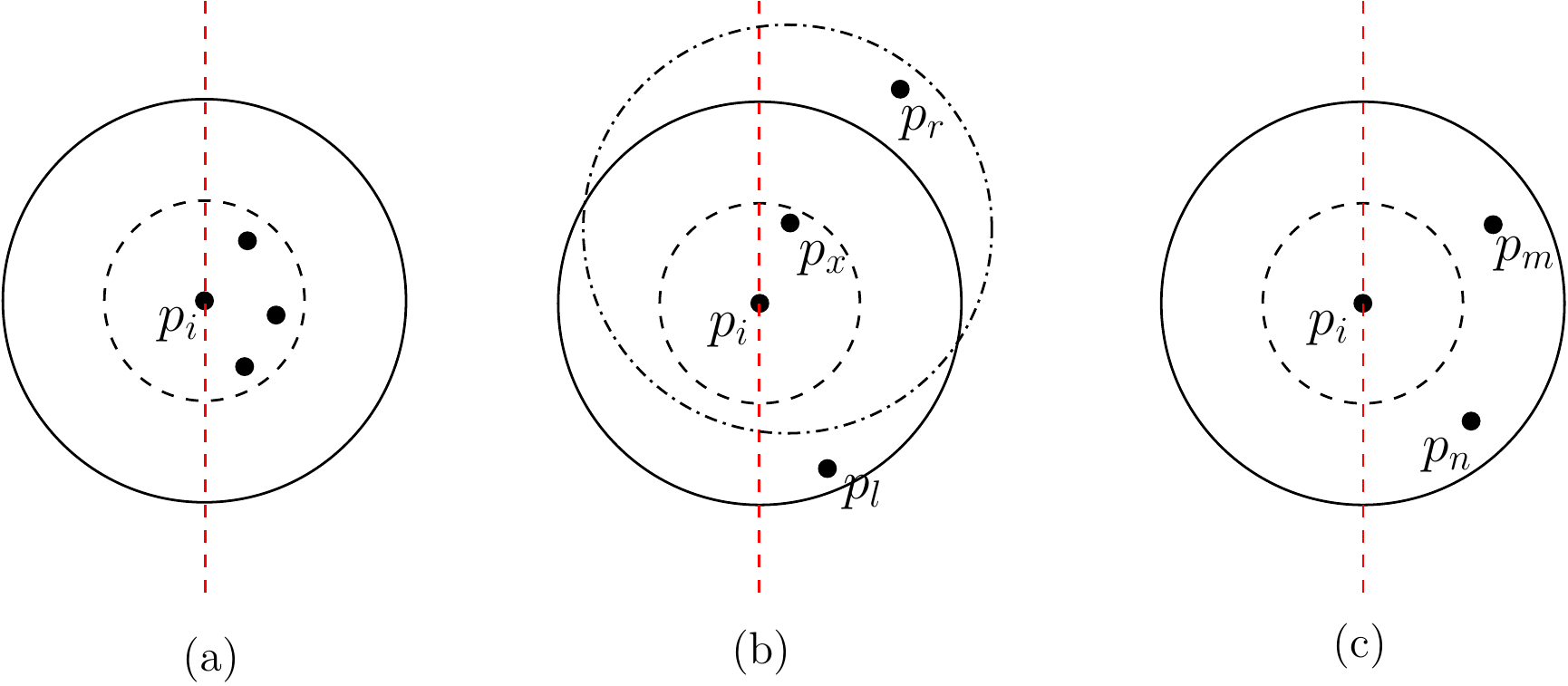}\vspace{-0.1in}
 \caption{An illustration of Case~2; (a) $|\text{Cov}_{\frac{1}{2}}(C(p_i))|\geq 3$,  (b) $|\text{Cov}_{\frac{1}{2}}(C(p_i))| = 2$, 
(c) $|\text{Cov}_{\frac{1}{2}}(C(p_i))|=1$.} \vspace{-0.1in}
 \label{approx-case-2}
 \end{figure}

\floatname{algorithm}{Procedure}
\begin{algorithm}[H]
\caption{Approximation Algorithm for LDS-UDG ($\mathcal P$)} \label{alg:ours}
Input: A set of points $\mathcal P$ in the plane.\\
Output: A subset of points $D\subseteq \mathcal P$, that is a liar's dominating set of the unit disk graph defined on the points of $\mathcal P$.
\begin{algorithmic}[1]

\STATE Sort the points of $\mathcal P$ based on their $x$-coordinates.\\ Let $\{p_1,p_2,\ldots, p_n\}$ be the sorted order.
\STATE Let $p_{left} \leftarrow p_1$
\STATE $t\leftarrow 0$ \hfill /*$t$ is a counter*/
\WHILE{$\mathcal P \ne \emptyset$}
\STATE $D= D\cup \{p_{left}\}$
\STATE Compute $\text{Cov}_{\frac{1}{2}}(C(p_{left}))$, $\text{Cov}_1(C(p_{left}))$, $\text{Cov}_{\frac{3}{2}}(C(p_{left}))$
\STATE Compute $Q=\text{Cov}_{\frac{3}{2}}(C(p_{left}))\setminus \text{Cov}_{\frac{1}{2}}(C(p_{left}))$
\FOR{$i \gets 1$ to $|Q|$}                    
        \STATE {Compute $S(q_i)=\text{Cov}_1(C(q_i))\cap \text{Cov}_{\frac{1}{2}}(C(p_{left}))$} \hfill /*$q_i\in Q$*/
\ENDFOR
\STATE Compute $S= \bigcup S(q_i)$
\IF{$S\ne \emptyset$}
\FOR{$i \gets 1$ to $|Q|$}                    
        \IF{$|S(q_i)|\ne \emptyset$}
				     \IF{$|S(q_i)| \geq 3$}
									\STATE $D=D\cup \{q_i,p_a,p_b\}$ \hfill  /*$p_a,p_b\in S(q_i)$ chosen arbitrarily*/
									\STATE $t\leftarrow t+2$
						 \ENDIF		
				\ELSIF{$S(q_i)=2$}
				     \STATE $D=D\cup \{q_i, p_a,p_b\}$  \hfill     /*$p_a,p_b\in S(q_i)$ are included in $D$*/
						 \STATE $t\leftarrow t+2$
				\ELSIF{$S(q_i)=1$}
						 \STATE $D=D\cup \{q_i, p_a\}$         \hfill       /*$p_a\in S(q_i)$ are included in $D$*/
						 \STATE $t\leftarrow t+1$
				\ENDIF
				\STATE $Q= Q\setminus \text{Cov}_{1}(C(q_{i}))$
\ENDFOR
\ENDIF
\IF{$t\geq 2$}
\STATE $D=D\cup \{p_{left}\}$ \\  \hfill /*Since already $2$ points are present from $\text{Cov}_{\frac{1}{2}}(C(p_{left}))$ in $D$*/
\STATE \textbf{Jump to} line~$35$
\ELSE 
\STATE $D= D$ $\cup$ \textbf{return} $SelectPoint(p_{left}, \mathcal P, \text{Cov}_{\frac{1}{2}}(C(p_{left})), \text{Cov}_{1}(C(p_{left})))$
\ENDIF
\STATE $\mathcal P=\mathcal P\setminus \text{Cov}_{\frac{1}{2}}(C(p_{left}))$
\STATE $p_{left}\leftarrow p_{next}$ \hfill /* $p_{next}$ is the next left-most point in $\mathcal P$*/
\ENDWHILE
\RETURN $D$.
\end{algorithmic}
\end{algorithm}
\clearpage

\floatname{algorithm}{Procedure}
\begin{algorithm}[ht]
\caption{$SelectPoint(p_{left}, \mathcal P, \text{Cov}_{\frac{1}{2}}(C(p_{left})), \text{Cov}_{1}(C(p_{left})))$} \label{alg:choosepoint}
Input: A point $p_{left}\in \mathcal P$, the sets $\text{Cov}_{\frac{1}{2}}(C(p_{left})), \text{Cov}_{1}(C(p_{left}))$. \\
Output: A subset of points $M\subset \mathcal P$.
\begin{algorithmic}[1]
\STATE Set $M\leftarrow \emptyset$
\STATE Set $X\leftarrow \emptyset$
\IF{$|\text{Cov}_{\frac{1}{2}}(C(p_{left}))|\geq 3$}
			\STATE $M= M \cup\{p_{left},p_i,p_j\}$ \hfill /*$p_i,p_j\in \text{Cov}_{\frac{1}{2}}(C(p_{left}))$ chosen arbitrarily.*/
			\ELSIF{$|\text{Cov}_{\frac{1}{2}}(C(p_{left}))|=2$}		
						\STATE $X=(\text{Cov}_1(C(p_{left}))\cap \text{Cov}_1(C(p_x)))\setminus \{p_{left},p_x\}$  \\
								\hfill /*$p_x$ is the other point in $\text{Cov}_{\frac{1}{2}}(C(p_{left}))$.*/
						\IF{$|X|\ne \emptyset$}
										\STATE $M=M\cup \{p_{left},p_x,p_m\}$ \hfill /*$p_m\in X$ is chosen arbitrarily.*/
						\ELSE 
										\STATE $M=M\cup \{p_{left},p_x,p_m,p_n\}$  \\ 
										\hfill /*$p_m\in \text{Cov}_1(C(p_{left}))$ and  $p_n\in \text{Cov}_1(C(p_x))$ is chosen arbitarily.*/
						\ENDIF		
			\ELSE 
						\STATE $M=M\cup \{p_{left},p_m,p_n\}$  \\
									\hfill 	/*$p_m, p_n\in \text{Cov}_1(C(p_{left})) \setminus \text{Cov}_{\frac{1}{2}}(C(p_{left}))$ chosen arbitrarily.*/
\ENDIF	
\RETURN $M$
\end{algorithmic}
\end{algorithm}

\begin{lemma}\label{correctness}
The set $D$ obtained from Algorithm~\ref{alg:ours}, is a liar's dominating set of the unit disk graph defined on the points of $\mathcal P$. \label{lem:lds-approx-correctness}
\end{lemma}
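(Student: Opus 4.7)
The strategy is to fix an arbitrary point $p\in\mathcal P$ and separately verify (i) it is doubly dominated by $D$, and (ii) for every other $p'\in\mathcal P$ the pair $\{p,p'\}$ is triply dominated by $D$. The natural accounting associates each point $p$ with the iteration of the while loop that deletes it from $\mathcal P$ on line~35, namely the iteration whose leftmost point $p_{left}$ satisfies $p\in\text{Cov}_{\frac{1}{2}}(C(p_{left}))$. I would first isolate two elementary geometric facts used throughout: (a) any two points of $\text{Cov}_{\frac{1}{2}}(C(p_{left}))$ lie within distance $1$ and are hence mutually adjacent in $G$, and both adjacent to $p_{left}$; and (b) every point of $S(q_i)$ is adjacent to the corresponding $q_i\in Q$, since $S(q_i)\subseteq\text{Cov}_1(C(q_i))$. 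These two observations drive every case check below.

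With these in hand, condition~(i) is quick. Both Case~1 and Case~2 unconditionally insert $p_{left}$ into $D$, which by (a) dominates every $p\in\text{Cov}_{\frac{1}{2}}(C(p_{left}))$. A second dominator is supplied in each subcase of Case~2 by the extra point(s) taken from $\text{Cov}_{\frac{1}{2}}(C(p_{left}))$ or $\text{Cov}_1(C(p_{left}))$, which are again neighbors of $p$ by (a). In Case~1 with counter $t\geq 2$, at least two witnesses from $\bigcup S(q_i)\subseteq\text{Cov}_{\frac{1}{2}}(C(p_{left}))$ have been added, and these are adjacent to every such $p$ by (a). If $t<2$ the algorithm falls through to Case~2, so this case reduces to the previous step.

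For condition~(ii) I would split on whether the pair lies inside a single iteration or spans two iterations. A same-iteration pair $p,p'\in\text{Cov}_{\frac{1}{2}}(C(p_{left}))$ is handled by observing, via (a), that $N[p]\cup N[p']$ contains the entire set $\{p_{left}\}\cup\text{Cov}_{\frac{1}{2}}(C(p_{left}))$, and then walking through the subcases of Case~2: if $|\text{Cov}_{\frac{1}{2}}(C(p_{left}))|\geq 3$, three inner points are added directly; if it equals $2$, the two inner points plus $p_m\in X$, or the pair $p_l,p_r\in\text{Cov}_1$, give three distinct members; and if it equals $1$ we have $p_{left}$ together with $p_m,p_n\in\text{Cov}_1$. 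Case~1 with $t\geq 2$ contributes $p_{left}$ together with at least two elements of $\bigcup S(q_i)$, again totalling three.

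The main obstacle will be cross-iteration pairs $p\in\text{Cov}_{\frac{1}{2}}(C(p_{left}))$ and $p'\in\text{Cov}_{\frac{1}{2}}(C(p_{left}'))$ with distinct leftmost points. By condition~(i), there are already two elements $d_1,d_2\in N[p]\cap D$ chosen during the iteration of $p_{left}$, and two elements $d_1',d_2'\in N[p']\cap D$ chosen during the iteration of $p_{left}'$. Since line~35 removes $\text{Cov}_{\frac{1}{2}}(C(p_{left}))$ before the next iteration begins, the two inner disks are disjoint and the dominators produced in each iteration lie within a bounded neighborhood of their respective $p_{left}$. I would argue, case by case, that at most one of $\{d_1,d_2\}$ can coincide with an element of $\{d_1',d_2'\}$, so the union has cardinality at least~$3$. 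Isolating this overlap -- especially when some $q_i$ from an earlier iteration happens to lie inside the inner disk of a later iteration, or when the two inner disks are close enough that an outer witness of one is an inner witness of the other -- is the delicate point and will require a careful geometric tally across the subcases of both Cases~1 and~2.
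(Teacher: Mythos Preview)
Your plan for condition~(i) and for same-iteration pairs is sound, and in fact tighter than what you notice: in every subcase you walk through for Case~2, and in Case~1 with $t\geq 2$, the three witnesses you exhibit all lie in $N[p]$ \emph{individually}, not merely in $N[p]\cup N[p']$. Either all three sit inside $\text{Cov}_{\frac12}(C(p_{left}))$ and are adjacent to $p$ by your fact~(a), or the outer witness is by construction adjacent to the particular $p\in\{p_{left},p_x\}$ under consideration. So your same-iteration analysis already yields $|N[p]\cap D|\geq 3$ for every $p$, and condition~(ii) then follows for \emph{every} pair simultaneously, cross-iteration or not. The elaborate cross-iteration step you outline is therefore unnecessary, and the ``at most one coincidence'' claim you intend to prove case by case is both harder than needed and not obviously true as stated: nothing prevents, for instance, the outer witnesses $p_m,p_n$ selected in one iteration's subcase~3 from coinciding with $p'_{left}$ and an inner witness of a later iteration.

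The paper's own argument takes a different route for the cross-iteration interaction. Rather than bounding overlaps of dominator sets, it emphasises the role of Case~1 as a \emph{pre-selection} mechanism: once $\text{Cov}_{\frac12}(C(p_{left}))$ is deleted on line~35, no later iteration can reach back into it, so Case~1 scans every $q_i$ in the surrounding annulus whose unit disk meets the inner half-disk and commits enough points of $S(q_i)$ to $D$ \emph{now}. The paper's concluding sentence---``due to Case~1, we have taken sufficient points from the considered disk for the disks that intersect with it''---is precisely this bridge between iterations. Your proposal never invokes that role of Case~1; you treat it only as another source of inner-disk witnesses. Either close your argument via the shortcut above (observe $|N[p]\cap D|\geq 3$ outright and drop the cross-iteration case), or align with the paper and use Case~1's pre-selection as the mechanism that survives the deletion step.
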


\begin{proof}
Algorithm~\ref{alg:ours} primarily relies on two cases.
We begin with the leftmost point (say $p_i$).
We first compute the sets $\text{Cov}_{\frac{1}{2}}(C(p_i))$, $\text{Cov}_{1}(C(p_i))$ and $\text{Cov}_{\frac{3}{2}}(C(p_i))$.
Then, we consider the points that lie in $Q= (\text{Cov}_{\frac{3}{2}}(C(p_i))\setminus \text{Cov}_{\frac{1}{2}}(C(p_i)))$.
If the disks centered at these points of radius~$1$ share points with $\text{Cov}_{\frac{1}{2}}(C(p_i))$,
we choose points from their intersection to $D$.
This choice is based on the number of points available in the intersection. 
For every point $p_k\in Q$,
if $|S(p_k)|\geq 3$ ($S(p_k)=\text{Cov}_1(C(p_k))\cap \text{Cov}_{\frac{1}{2}}(C(p_i))$), we arbitrarily choose $2$ points in $D$.
Otherwise, we choose one or two points in $D$, based on the cardinality of $S(p_k)$.
Once these points are selected, we remove the remaining points of $\text{Cov}_1(C(p_k))$ at this step from the set $Q$.
Notice that the points lying in $\text{Cov}_1(C(q_i))$ are already $1$-dominated. 
Later, we can pick those points if required.  
Clearly, after this process any point that lies in $\text{Cov}_{\frac{3}{2}}(C(p_i))$ and
shares points with $\text{Cov}_{\frac{1}{2}}(C(p_i))$ is at least $1$-dominated or $2$-dominated,
Notice that for any point that lies outside $\text{Cov}_{\frac{3}{2}}(C(p_i))$, its corresponding disk does not share any region with $\text{Cov}_{\frac{1}{2}}(C(p_i))$.
Later, for any points in $\text{Cov}_{\frac{3}{2}}(C(p_i))$ if we have to choose points from $\text{Cov}_{\frac{1}{2}}(C(p_i))$ to fulfill the 
criteria of LDS, we already took them in the output set at this stage. 
Meanwhile, we maintain a counter $t$ that keeps track of how many points we are picking from $\text{Cov}_{\frac{1}{2}}(C(p_i))$, for each point $p_k\in Q$. 
If $t$ is at least $2$, we know $|\text{Cov}_{\frac{1}{2}}(C(p_i))\cap D| \geq 2$.
We simply add $p_{i}$ to the output set and do not enter into Case~2.
Otherwise, we proceed to Case~2. 
Besides, if there is no point $p_x\in (\text{Cov}_{\frac{3}{2}}(C(p_i))\setminus \text{Cov}_{\frac{1}{2}}(C(p_i)))$,
such that the disk centered at $p_x$ shares points with $\text{Cov}_{\frac{1}{2}}(C(p_i))$, we are in Case~2. 

Now we show the correctness of Case~2. 
In this case, if $|\text{Cov}_{\frac{1}{2}}(C(p_i))|\geq 3$, then we include three points arbitrarily including $p_i$ (say $p_i, p_j, p_k$ in $D$).
Note that, the distance between any two points $p_l,p_r\in \text{Cov}_{\frac{1}{2}}(C(p_i))$ is at most $1$.
Since we are including $3$ points from $\text{Cov}_{\frac{1}{2}}(C(p_i))$ in $D$, it fulfills the criteria 
of LDS of the points contained inside the unit disk centered at $p_i$.
Next, consider the case if $|\text{Cov}_{\frac{1}{2}}(C(p_i))|=2$. 
Let $p_x$ be the other point in $\text{Cov}_{\frac{1}{2}}(C(p_i))$. 
We compute $X=(\text{Cov}_1(C(p_i))\cap \text{Cov}_1(C(p_x)))\setminus \{p_i,p_x\}$.
Note that, due to the definitions of the problem, we know
$|\text{Cov}_1(C(p_i))|\geq 3$ and $|\text{Cov}_1(C(p_x))|\geq 3$.
When $|X|\neq \emptyset$ we include an additional point arbitrarily (say $p_m\in X$)
in $D$ to fulfill the second condition of the LDS of the points $p_i, p_x \in \text{Cov}_{\frac{1}{2}}(C(p_1))$.
While $|X|=\emptyset$, we include $2$ points $p_m$ and $p_n$
(both the points are chosen arbitrarily) from $\text{Cov}_1(C(p_i))$ and $\text{Cov}_1(C(p_x))$,
respectively. Notice that this makes the points $p_i$ and $p_x$ individually $2$ dominated. 
Now, for any pair of points from $\text{Cov}_1(C(p_i))$, at least $3$ points are chosen (that is, $p_i,p_x,p_m$). 
Finally, when $|\text{Cov}_{\frac{1}{2}}(C(p_i))|=1$, we include
two points $p_m, p_n \in \text{Cov}_1(C(p_i))$ arbitrarily in $D$
to fulfill the criteria of LDS for $p_i$. 
Note that the existence of two points $p_m$ and $p_n$ in the set $\text{Cov}_1(C(p_i))$ follows from the problem constraint.

Thus, at every iteration, we made sure that the considered disk and the points inside that disk, fulfills the criteria of LDS.
Furthermore, due to Case~1, we have taken sufficient points from the considered disk for the disks that intersect with it. This completes the proof. 
\qed 
\end{proof}

\begin{lemma}\label{approx-factor} 
Algorithm~\ref{alg:ours} outputs a liar's dominating set $D\subseteq \mathcal P$ of the unit disk graph defined on the points of $\mathcal P$ with approximation ratio ${\frac{11}{2}}$.
\end{lemma}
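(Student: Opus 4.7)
The plan is to compare, iteration by iteration, the points added by Algorithm~\ref{alg:ours} against those of any fixed optimum liar's dominating set $L^\ast$, and to absorb the overcounting by planar packing. Write $p_i$ for the leftmost surviving point at the start of iteration $i$, and set $B_i = \text{Cov}_{\frac{1}{2}}(C(p_i))$, $A_i = \text{Cov}_{\frac{3}{2}}(C(p_i))\setminus B_i$, and $D_i\subseteq D$ for the points newly inserted during iteration $i$. Because line~35 deletes $B_i$ from $\mathcal P$ at the end of each iteration, the cores $B_1, B_2, \dots$ are pairwise disjoint and the centres $p_1, p_2, \dots$ are pairwise at distance strictly greater than $1/2$; this disjointness is the engine that prevents double-counting.

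The first step is a constant uniform upper bound on $|D_i|$. In Case~2 the helper \texttt{SelectPoint} inserts at most four points. In Case~1, $D_i$ consists of $p_i$ together with $\{q\}\cup S(q)$ for every processed annulus point $q$, with at most two entries of $S(q)$ added. Since line~26 removes $\text{Cov}_1(C(q))$ from $Q$ after processing $q$, any two processed $q$'s are at distance $>1$, so the open disks of radius $1/2$ centred at them are pairwise interior-disjoint and are contained in the disk of radius $2$ around $p_i$; a planar area bound then caps their number by a small constant. If Case~2 is also executed (which forces the counter $t<2$, i.e.\ at most one annulus point has been processed and that one satisfies $|S(q)|=1$), \texttt{SelectPoint} adds at most three further points, giving a uniform bound on $|D_i|$.

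The second step uses the local $L^\ast$ lower bound: since $p_i$ must be $2$-dominated, $|L^\ast \cap \text{Cov}_1(C(p_i))|\geq 2$. If the unit disks $\text{Cov}_1(C(p_i))$ were pairwise disjoint, summing would give $|D|\leq\tfrac{11}{2}|L^\ast|$ immediately; they are not, so the proof hinges on a second packing argument that assigns each $x\in L^\ast$ to every iteration $i$ with $x\in\text{Cov}_1(C(p_i))$. The fact that $\|p_i-p_j\|>1/2$ for $i\neq j$ caps this multiplicity by another constant, and combining this with the per-iteration upper bound on $|D_i|$ and the per-iteration lower bound of $2$ OPT points yields the claimed ratio.

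The main obstacle will be calibrating the two packing constants and the per-case contributions so that they multiply to exactly $11/2$. I expect the argument to split on whether the counter $t$ reaches $2$: when $t\geq 2$ the two points inserted inside $B_i$ during Case~1 can be charged directly against the two OPT points that $2$-dominate $p_i$, whereas when $t<2$ the extra \texttt{SelectPoint} additions must be paid for by OPT points in $\text{Cov}_1(C(p_i))\cup\text{Cov}_1(C(p_x))$, whose existence is guaranteed by the standing assumption $|N[p]|\geq 2$ for every $p\in\mathcal P$ together with the three-in-union condition of the LDS definition.
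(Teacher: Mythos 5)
Your high-level framework matches the paper's: charge the points inserted in each iteration against the at least two points any optimal solution must place in $\text{Cov}_{1}(C(p_i))$ to $2$-dominate the current leftmost point $p_i$. But the proposal stops short of the one quantitative step that actually produces the constant, and the packing argument you sketch would not deliver it. You bound the number of processed annulus points by packing interior-disjoint disks of radius $\tfrac{1}{2}$ inside a disk of radius $2$ around $p_i$ and invoking ``a planar area bound''; that gives a constant on the order of $16$--$25$, so your per-iteration count would be roughly $2\cdot 16+1$ rather than $11$. The paper instead gets the bound $5$ from an angular/perimeter argument: since $p_i$ is the leftmost surviving point, only annulus points to its right are relevant, they all lie within distance $\tfrac{3}{2}$ of $p_i$, and any two processed ones are at distance greater than $1$; the half-perimeter $\pi\cdot\tfrac{3}{2}\approx 4.71$ then caps the number of such mutually non-adjacent unit disks at $5$, giving $5\times 2+1=11$ points per iteration against $2$ optimal points. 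Without this (or an equivalent) argument, the ratio $\tfrac{11}{2}$ is not reached, and you explicitly defer the ``calibration'' rather than carrying it out, so the proof of the stated ratio is incomplete.

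A second remark: you correctly flag that the disks $\text{Cov}_{1}(C(p_i))$ for distinct iterations can overlap, so the two optimal points charged in different iterations need not be distinct, and you propose to control the resulting multiplicity by a further packing constant. Be aware that if that multiplicity is any constant larger than $1$, the final ratio degrades by that factor and again misses $\tfrac{11}{2}$. The paper's own proof simply asserts the per-iteration lower bound of $2$ and concludes, without addressing this multiplicity; so while your concern is a legitimate one about the argument, resolving it the way you suggest (a second constant-factor packing bound) is not compatible with the claimed ratio --- one would instead need to argue that the charged optimal points are (essentially) distinct across iterations, e.g.\ by charging only to optimal points in or near the pairwise-disjoint cores $B_i$.
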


\begin{proof}
For each point $p_i\in \mathcal P$, we consider the set $\text{Cov}_{\frac{1}{2}}(C(p_i))$ and choose points from this set.
Consider the disk $C(p_i)$ of radius $\frac{3}{2}$.
Notice that the points that lie outside the set $\text{Cov}_{\frac{3}{2}}(C(p_i))$, their corresponding disks of radius~$1$, do not share any points with the set
 $\text{Cov}_{\frac{1}{2}}(C(p_i))$.
Thus, we only need to consider the points inside $\text{Cov}_{\frac{3}{2}}(C(p_i))$.
The half-perimeter of the of the disk $C(p_i)$ of radius $\frac{3}{2}$ is $(\pi\cdot {\frac{3}{2}}) \approx 4.71$ 
(since we are only considering the points which are to the right of $p_i$). 
Thus, we can pack at most $5$ disks of radius~$1$ inside $C(p_i)$ such that they mutually do not contain the center of other disks 
(see Figure~\ref{fig:approx}).
We consider the points that lie in $(\text{Cov}_{\frac{3}{2}}(C(p_i))\setminus \text{Cov}_{\frac{1}{2}}(C(p_i)))$.
If the disks centered at these points of radius~$1$ share points with $\text{Cov}_{\frac{1}{2}}(C(p_i))$,
we choose points from their intersection to $D$.
Now, for each point $q_i\in (\text{Cov}_{\frac{3}{2}}(C(p_i))\setminus \text{Cov}_{\frac{1}{2}}(C(p_i)))$, we choose at most two points 
from $\text{Cov}_{\frac{1}{2}}(C(p_i)) \cap \text{Cov}_{1}(C(q_i))$ (see Procedure~\ref{alg:ours}). 
Once these points are selected, we remove the remaining points at this step from the set $Q$ (recall that $Q=(\text{Cov}_{\frac{3}{2}}(C(p_i))\setminus \text{Cov}_{\frac{1}{2}}(C(p_i)))$).
The points that lie inside this disk, is already $1$-dominated.
Later, we can pick the point itself, if required. 
Besides we also pick $p_i$. 
In this case, we pick at most $(5\times 2)+1 = 11$.

\begin{figure}[h]
 \centering
 \includegraphics[scale=0.5]{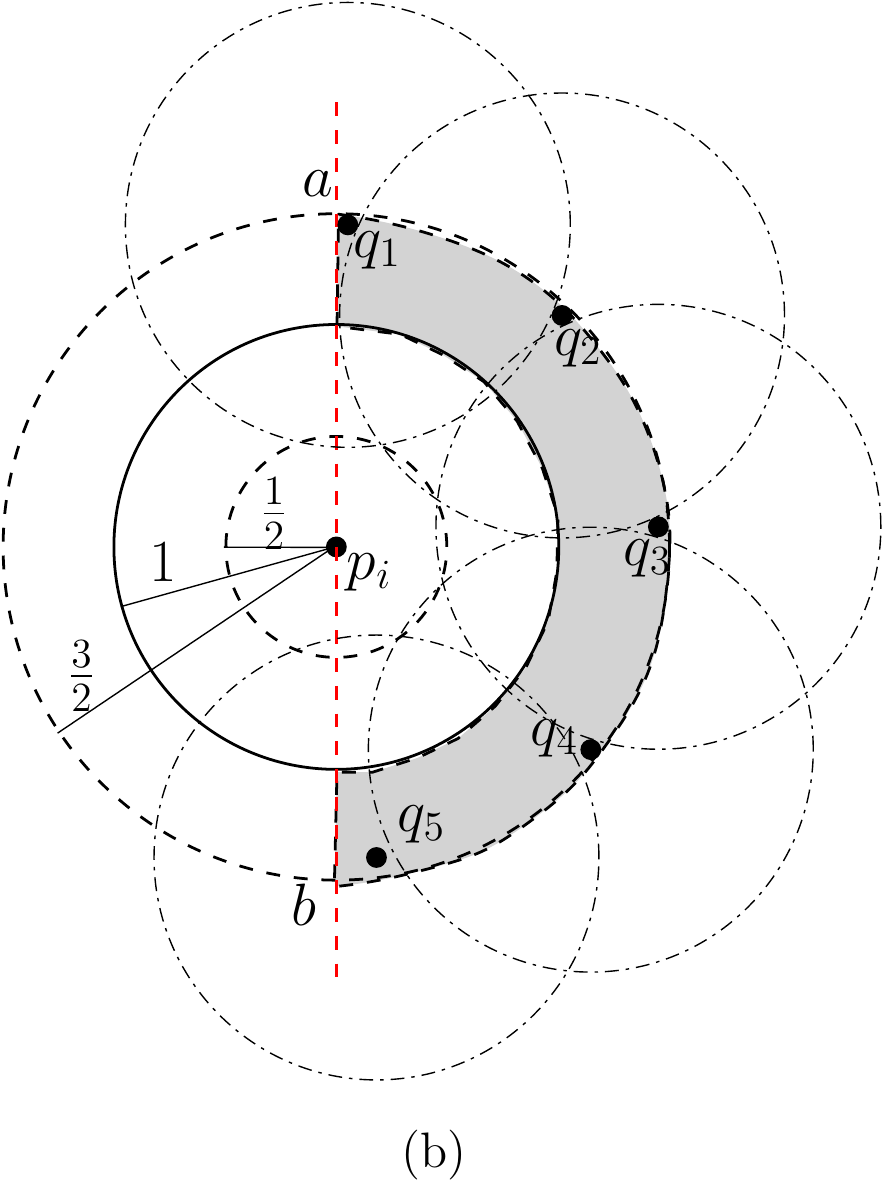}\vspace{-0.1in}
 \caption{An illustration of Lemma~\ref{approx-factor}; at most $5$ disks of radius~$1$ can be packed such that they mutually do not contain the center of other disks.} \vspace{-0.1in}
 \label{fig:approx}
 \end{figure}

Now, if there is no point in $(\text{Cov}_{\frac{3}{2}}(C(p_i))\setminus \text{Cov}_{\frac{1}{2}}(C(p_i)))$, whose corresponding disk that shares a region $\text{Cov}_{\frac{1}{2}}(C(p_i))$,
or the number of points picked from $\text{Cov}_{\frac{1}{2}}(C(p_i))$ is less than $2$, we enter into Case~2 (see Procedure~\ref{alg:choosepoint}).
Here, we include at most three points including the point $p_i$. However, we enter into this case only when sufficient points have not been picked in Case~1. 
Thus, in this case we have picked fewer than $11$ points from $\text{Cov}_{1}(C(p_i))$ (precisely, we pick $4$ points). 

Now, any optimal solution contains at least two points from $\text{Cov}_{1}(C(p_i))$, for each point $p_i\in \mathcal P$.
Since we have to fulfill the first constraint of the LDS that each disk is individually $2$ dominated. 
Hence, we get the approximation factor ${\frac{11}{2}}$. 
\qed
\end{proof}

Algorithm~\ref{alg:ours} clearly runs in polynomial time (to be precise in sub-quadratic time). Thus, from Lemma~\ref{correctness} and Lemma~\ref{approx-factor} we conclude the following theorem. 

\begin{theorem}
Algorithm~\ref{alg:ours} computes a liar's dominating set $D\subseteq \mathcal P$ of the unit disk graph defined on the points of $\mathcal P$ in sub-quadratic time
with approximation ratio ${\frac{11}{2}}$.
\end{theorem}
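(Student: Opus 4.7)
The theorem is essentially the conjunction of three facts: the output is a valid liar's dominating set, its size is within $\frac{11}{2}$ of optimum, and the algorithm runs in sub-quadratic time. The first two have already been established as Lemma~\ref{correctness} and Lemma~\ref{approx-factor}, so the plan is to invoke these two lemmas and then argue the running-time bound. I would begin the proof by citing Lemma~\ref{lem:lds-approx-correctness} for the feasibility of $D$ as an LDS, and Lemma~\ref{approx-factor} for the $\frac{11}{2}$ approximation ratio. This reduces the remaining obligation to a running-time analysis.

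For the time bound, I would first spend $O(n \log n)$ sorting the points by $x$-coordinate in line~1. The main loop processes a distinct leftmost point $p_{\text{left}}$ at each iteration, and since line~35 deletes all points in $\text{Cov}_{\frac{1}{2}}(C(p_{\text{left}}))$ from $\mathcal P$, each point is selected as a leftmost point at most once. Inside each iteration the algorithm performs three circular range queries (for radii $\frac{1}{2}, 1, \frac{3}{2}$), iterates over the points in $Q$ performing one more circular range query for each, and then runs \textsc{SelectPoint}, which again uses the already-computed sets. With a static planar range-reporting structure (e.g., a $k$-d tree or a grid-based neighborhood data structure for unit disk graphs), each circular range query of constant radius can be answered in $O(\log n + s)$ time where $s$ is the output size; summed over the whole execution, the $s$-terms telescope because each reported point is amortized to the iteration owning its $\frac{1}{2}$-disk.

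The main obstacle I anticipate is bookkeeping to show that the inner \texttt{for} loops over $Q$ do not blow up to $\Theta(n)$ per iteration, since a priori $|Q|$ could be large. The key observation is that the updates in line~26 (removing $\text{Cov}_{1}(C(q_i))$ from $Q$) together with the outer deletion of $\text{Cov}_{\frac{1}{2}}(C(p_{\text{left}}))$ from $\mathcal P$ ensure that the total work charged across all $q_i$ processed during the algorithm is proportional to a packing quantity, and a standard packing argument for unit disks in the plane (each disk of radius $\frac{3}{2}$ centered at some $p_{\text{left}}$ overlaps only $O(1)$ neighboring such disks whose centers lie in the sequence of chosen leftmost points) bounds the total cost by $O(n \log n)$ plus $O(n \cdot f(n))$ for an appropriate sublinear $f(n)$ coming from the range-query structure.

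Combining these three components yields the claimed bound. Concretely, I would conclude the proof by stating that Algorithm~\ref{alg:ours} runs in sub-quadratic time, and together with Lemmas~\ref{correctness} and~\ref{approx-factor} this gives a sub-quadratic time $\frac{11}{2}$-approximation algorithm for LDS on unit disk graphs, establishing the theorem. \qed
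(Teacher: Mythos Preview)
Your proposal is correct and follows the same skeleton as the paper: invoke Lemma~\ref{correctness} for feasibility, Lemma~\ref{approx-factor} for the approximation ratio, and then argue the running time. In fact the paper's own treatment of this theorem is a single sentence preceding the theorem statement, asserting that the algorithm ``clearly runs in polynomial time (to be precise in sub-quadratic time)'' and then citing the two lemmas; no running-time analysis is given. Your attempt at an explicit time analysis via range-reporting structures and a packing argument on the sequence of chosen leftmost points is therefore \emph{more} detailed than what the paper provides, though some of your amortization claims (in particular, the telescoping of the $s$-terms and the bound on the total work in the inner loops over $Q$) are sketched rather than fully worked out. Since the paper itself does not supply these details either, your proposal is at least as complete as the original.
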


\subsection{PTAS for $\boldmath{k}$-DS on Unit Disk Graphs}\label{ptas_unitdisk}
In this section we give a PTAS for the $k$-tuple dominating set on unit disk graphs with a similar approach used by Nieberg and Hurink~\cite{NH05}. It might be possible to design a PTAS by using local search or shifting strategy for the $k$-tuple dominating set problem on unit disk graphs. However, the complexity of these algorithms would be dependent on $n$ and $k$. 
Thus we use the approach of Nieberg and Hurink~\cite{NH05}, that gurantees a better running time. 

Let $G=(V,E)$ be an unit disk graph in the plane. 
For a vertex $v\in V$, let $N^{r}(v)$ and $N^{r}[v]=N^{r}(v) \cup \{v\}$ be the
$r$-th neighborhood and $r$-th closed neighborhood of $v$, respectively.
For any two vertices $u,v\in V$, let $\delta(u,v)$ be the distance
between $u$ and $v$ in $G$, that is the number of edges of a shortest path between $u$ and $v$ in $G$.
Let $D_{k}(V)$ be the minimum $k$-tuple dominating set of $G$.
For a subset $W\subseteq V$, let $D_{k}(W)$ be the minimum $k$-tuple dominating set of
the induced subgraph on~$W$.
We prove the following theorem. 

\begin{theorem}
There exists a PTAS for the $k$-tuple dominating set problem on unit disk graphs.
\end{theorem}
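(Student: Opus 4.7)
The plan is to adapt the neighborhood-growing scheme of Nieberg and Hurink for classical dominating set on unit disk graphs to the $k$-tuple setting. The algorithm processes $G$ iteratively: pick any unprocessed vertex $v$, grow concentric graph neighborhoods $N^0[v] \subseteq N^1[v] \subseteq N^2[v] \subseteq \dots$, and stop at the smallest radius $r$ at which moving two layers outward inflates the (locally computed) optimum $k$-tuple dominating set by at most a factor of $(1+\epsilon)$. Add this local solution $D_k(N^{r+1}[v])$ to the output, remove $N^{r-1}[v]$ from the graph (all its vertices are now $k$-dominated from within the output), and repeat on the remaining graph until empty.

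Three ingredients are required. First, a polynomial growth bound: for every $v$ and $r \ge 0$, $|D_k(N^r[v])| = O(k \cdot r^2)$. This follows from a geometric packing argument, since all vertices of $N^r[v]$ lie inside a Euclidean disk of radius $r+1$ around $v$ and a feasible $k$-tuple dominating set is bounded by $k$ times a classical dominating set (or more directly, by covering the Euclidean disk by $O(r^2)$ cells of diameter $\le 1/2$ and picking up to $k$ vertices per cell, exploiting the feasibility hypothesis that every vertex has degree $\ge k-1$). Second, a bounded-radius lemma: if the stopping condition failed for every $r \le R$, one would obtain, by telescoping, $|D_k(N^R[v])| \ge (1+\epsilon)^{R/2}$, which contradicts the $O(k R^2)$ bound for $R$ larger than some $\rho(k,\epsilon) = O(\epsilon^{-1}\log(k/\epsilon))$. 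So the stopping radius is always bounded by a constant depending only on $k$ and $\epsilon$. Third, an exact local algorithm: for $r \le \rho(k,\epsilon)$, $|D_k(N^{r+1}[v])|$ is bounded by a constant $C(k,\epsilon)$, so exhaustive enumeration of all vertex subsets of $N^{r+1}[v]$ of size at most $C(k,\epsilon)$ finds an optimal local $k$-tuple dominator in $n^{O_{k,\epsilon}(1)}$ time.

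The approximation guarantee will come from a $2$-separation argument. The iterative peeling ensures that any two chosen centers $v_i, v_j$ satisfy $\delta(v_i, v_j) > r_i + r_j + 2$, so the outer shells $N^{r_i+1}[v_i]$ are pairwise vertex-disjoint and mutually non-adjacent. Hence, for the optimum $D^*$, the restriction $D^* \cap N^{r_i+1}[v_i]$ must by itself $k$-dominate every vertex of $N^{r_i}[v_i]$, giving $|D^* \cap N^{r_i+1}[v_i]| \ge |D_k(N^{r_i}[v_i])|$ (with the appropriate boundary version of $D_k$). Combining with the stopping condition $|D_k(N^{r_i+1}[v_i])| \le (1+\epsilon)|D_k(N^{r_i}[v_i])|$, summing over $i$, and using disjointness yields
\[
  |D_{\mathrm{alg}}| \;=\; \sum_i |D_k(N^{r_i+1}[v_i])| \;\le\; (1+\epsilon) \sum_i |D^* \cap N^{r_i+1}[v_i]| \;\le\; (1+\epsilon)\,|D^*|.
\]

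The principal obstacle, compared to the classical case, is the boundary treatment of the $k$-tuple constraint. A vertex near the outer rim of $N^{r}[v]$ may require several of its $k$ dominators to lie outside $N^{r+1}[v]$, so the ``local subproblem'' must be phrased as $k$-dominating only the interior $N^{r-1}[v]$ using candidates from $N^{r+1}[v]$ (exactly what gets added to the output and what gets peeled off must be aligned so that the $2$-separation still makes the charging argument go through). A secondary concern is to confirm that the packing bound $O(k r^2)$ does not force $\rho(k,\epsilon)$ to depend badly on $k$; because $\rho$ enters the running time as an exponent, even a logarithmic dependence $\rho = O(\epsilon^{-1}\log(k/\epsilon))$ keeps the algorithm polynomial for every fixed $k$ and $\epsilon$, which is exactly what is needed for a PTAS.
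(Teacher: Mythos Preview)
Your proposal is correct and follows essentially the same Nieberg--Hurink neighborhood-growing scheme as the paper, with the same $2$-separated collection lower bound and the same telescoping argument for bounding the stopping radius. The only notable differences are that you obtain a tighter growth bound $|D_k(N^r[v])| = O(kr^2)$ via a direct geometric cell-covering argument (the paper proves $O(k^2 r^2)$ by taking $k$ iterated maximal independent sets), and that you are more explicit about the boundary formulation of the local subproblem, which the paper glosses over.
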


\begin{proof}
The $2$-separated collection of subsets is defined as follows:
Given a graph $G=(V,E)$, let $S=\{S_1,\ldots,S_m\}$ be a collection of
subsets of vertices $S_i\subset V$, for $i=1,\ldots,m$, such that
for any two vertices $u\in S_i$ and $v\in S_j$ with $i\ne j$, $\delta(u,v)>2$.
In the following lemma we prove that the sum of the cardinalities of the minimum 
$k$-tuple dominating sets $D_{k}(S_i)$ 
for the subsets $S_i\in S$ of a $2$-separated collection is 
a lower bound on the cardinality of $D_{k}(V)$.

\begin{lemma}\label{lem:ptas}
Given a graph $G=(V,E)$, let $S=\{S_1,\ldots,S_m\}$ be a 2-separated
collection of subsets then, $|D_k(V)|\ge \sum_{i=1}^{m} |D_k(S_i)|$.
\label{lem:ptas}
\end{lemma}

\begin{proof}
Consider a subset $S_i\in S$, and its neighborhood $N[S_i]$.
We know from the properties of $2$-separated collection of subsets that,
any two subsets $S_i,S_j\in S$ are disjoint for each $i\ne j$.
Any vertex outside $N[S_i]$ has at least distance more than one to any vertex in $S_i$.
Hence, $D_k(V)\cap N[S_i]$ has to be a $k$-tuple domination of the vertices of $S_i$, where
$D_k(V)$ is the $k$-tuple domination of $G$.
Otherwise the vertices of $S_i$ is not $k$-tuple dominated,
since no vertex outside $N[S_i]$ can dominate a vertex of $S_i$.
On the other hand, we know $D_{k}(S_i)\subset N[S_i]$.
This implies $D_{k}(S_i) \leq D_{k}(V)\cap N[S_i]$.
Hence, $\sum_{i=1}^{m} |D_{k}(S_i)| \leq \sum_{i=1}^{m}|D_{k}(V)\cap N[S_i]| \leq |D_{k}(V)|$.~\qedsymbol
\end{proof}

From Lemma~\ref{lem:ptas}, we get the lower bound of the minimum
$k$-tuple dominating set of $G$.
If we can enlarge each of the subset $S_i$ to a subset $T_i$
such that the $k$-tuple dominating set of $S_i$ (that is $D_{k}(S_i)$)
is locally bounded to the $k$-tuple dominating set of $T_i$ (that is $D_{k}(T_i)$),
then by taking the union of them we get an approximation of the $k$-tuple dominating set of $G$.
For each subset $S_i$, let there is a subset $T_i$ (where $S_i\subset T_i$),
and let there exists a bound $(1+\epsilon)$ ($0<\epsilon<1$) such that
$|D_{k}(T_i)|\leq (1+\epsilon)\cdot |D_{k}(S_i)|$.
Then, if we take the union of the $k$-tuple dominating sets of all $T_i$,
this is a $(1+\epsilon)$-approximation of the $k$-tuple dominating sets of the 
union of subsets $S_i$ (for $i=1,\ldots,m$).

Now, we describe the algorithm. 
We begin with an arbitrary vertex $v\in V$, and compute
the $k$-tuple dominating sets of minimum cardinality for these neighborhoods until $D_{k}(N^{r+2}[v]) > \rho \cdot D_{k}(N^{r}[v])$ 
(for a constant $\rho$).
Then, this process is used iteratively on the remaining graph induced by
$V_{i+1} = V_{i}\setminus N^{\hat{r_i}+2}[v_i]$ (where $\hat{r_i}$ is the first point at $i$th iteration when the condition is violated).
Let $\ell$ be the total number of iterations, where $\ell < n$.
Let $\{N_1,\ldots,N_{\ell}\}$ be the neighborhoods achieved from this process.
The following lemma shows that a $k$-tuple dominating set for the entire graph $G$
is given by the union of the sets $D_{k}(N_i)$.

\begin{lemma}\label{lem:neighbor}
Let $\{N_1,\ldots,N_{\ell}\}$ be the set of neighborhoods created by the above algorithm. 
The union $\bigcup^{\ell}_{i=1} (D_{k}(N_i))$ forms a $k$-tuple 
dominating set of $G$.
\end{lemma}

\begin{proof}
Consider the set $V_{i+1}=V_{i}\setminus N_{i}$, and we know $N_{i} \subset V_{i}$. 
Thus, $V_{i+1}=V_i \cup N_{i}$.
The algorithm stops while $V_{\ell+1}=\emptyset$, which means $V_{\ell}=N_{\ell}$.
Besides, $\bigcup^{\ell}_{i=1} (N_i)=V$.
Thus, if we compute the $k$-tuple dominating set $D_{k}(N_i)$ of each $N_i$, their union clearly is the 
$k$-tuple dominating set of the entire graph.
~\qedsymbol
\end{proof}

These subsets $N^{\hat{r_i}}[v_i]$, for $i=1,\ldots, \ell$, created by the algorithm
form a $2$-separated collection $\{N^{\hat{r_1}}[v_1],\ldots, N^{\hat{r_{\ell}}}[v_{\ell}]\}$ in $G$.
Consider any two neighborhoods $N_i,N_{i+1}$.
We have computed $N_{i+1}$ on graph induced by $V\setminus N_i$ (where $N_i = N^{\hat{r_i}+2}[v_i]$).
So, for any two vertices $u\in N_i$ and $v\in N_{i+1}$, the distance is at least $2$.
Thus we have the following corollary. 

\begin{corollary}
The algorithm returns a $k$-tuple dominating set $\bigcup^{\ell}_{i=1} (D_{k}(N_i))$ of cardinality no more
than $(1+\epsilon)$ the size of a dominating set $D_{k}(V)$.
\end{corollary}

It needs to be shown that this algorithm has a polynomial running time.
The number of iterations $\ell$ is clearly bounded by $|V|=n$.
It is important to show that for each iteration
we can compute the minimum $k$-tuple dominating set $D_{k}(N^{r}[v])$ in polynomial time
for $r$ being constant or polynomially bounded.

Consider the $r$-th neighborhood of a vertex $v$, $N^{r}[v]$.
Let $I^{r}$ be the maximal independent set of the graph induced by $N^{r}[v]$.
From \cite{NH05}, we have $I^{r}\leq (2r+1)^2 = O(r^2)$.
The cardinality of a minimum dominating set in $N^{r}[v]$
is bounded from above by the cardinality of a maximal independent set in $N^{r}[v]$.
Hence, $|D(N^{r}[v]| \leq (2r+1)^2 = O(r^2)$.
Now, we prove the following lemma.

\begin{lemma}
$|D_{k}(N^r[v])| \leq O(k^2\cdot r^2)$.
\label{fixed_kr}
\end{lemma}

\begin{proof}
For a vertex $v\in V$, let $N^{r}[v]$ be the $r$-th closed neighborhood of $v$.
Let $I^{r}_1$ be the first maximal independent set of $N^{r}[v]$.
We know $|D(N^{r}[v])| \leq |I^{r}_1| \leq (2r+1)^2$.

Now, we take the next maximal independent set $I^{r}_2$ from $N^{r}[v] \setminus I^{r}_1$, and take the union of them $(I^{r}_1 \cup I^{r}_2)$.
Notice that every vertex $v\in (N^{r}[v] \setminus (I^{r}_1\cup I^{r}_2))$ has $2$ neighbors in $(I^{r}_1\cup I^{r}_2)$,
so they can be $2$-tuple dominated by choosing vertices from $(I^{r}_1\cup I^{r}_2)$.  
Also, every vertex $v\in I^{r}_2$ can be $2$-tuple dominated by choosing vertices from $(I^{r}_1\cup I^{r}_2)$, 
since $v$ itself can be one and the other one can be picked from $I^{r}_1$. 
Additionally, for each vertex in $u \in I^{r}_1$, we take a vertex $z$ from the neighborhood of $u$ in $(N^{r}[v] \setminus (I^{r}_1\cup I^{r}_2))$.
Let $W(I^{r}_1)$ be the union of these vertices. 
Now, every vertex $v\in N^{r}[v]$ can be $2$-tuple dominated by choosing vertices from $(I^{r}_1 \cup I^{r}_2 \cup W(I^{r}_1))$. 
So, $|D_{2}(N^{r}[v])| \leq |(I^{r}_1 \cup I^{r}_2 \cup W(I^{r}_1))|$.
$|(I^{r}_1 \cup I^{r}_2 \cup W(I^{r}_1))| \leq 3\cdot (2r+1)^2$. Hence, $|D_{2}(N^{r}[v])| \leq 3\cdot (2r+1)^2$.
We continue this process $k$ times.

After $k$ steps, we get the union of the maximal independent sets  $A=\{I^{r}_1 \cup \ldots \cup I^{r}_{k}\}$.
Additionally, we get the unions of $B=\{W(I^{r}_1) \cup W(I^{r}_1\cup I^{r}_2) \cup \ldots \cup W(I^{r}_1\cup \ldots \cup I^{r}_{k-1})\}$.
Notice that every vertex $v\in N^{r}[v]$ can be $k$-tuple dominated by choosing vertices from $(A\cup B)$.
The cardinality of $(A\cup B)$ is at most $(2r+1)^2 \cdot (1+3+\ldots+(2k-1))$, which is $(2r+1)^2 \cdot k^2$.
We also know $|D_{k}(N^{r}[v])|$ is upper bounded by $(A\cup B)$. 
Thus, $|D_{k}(N^{r}[v])| \leq (2r+1)^2 \cdot k^2 \leq O(k^2\cdot r^2)$.
~\qedsymbol
\end{proof}

Nieberg and Hurink~\cite{NH05} showed that for a unit disk graph, there exists a bound on $\hat{r_1}$
(the first value of $r$ that violates the property $D(N^{r+2}[v]) > \rho \cdot D(N^{r}[v]))$.
This bound depends on the approximation $\rho$ not on the size of the of the unit disk graph $G=(V,E)$ given as input.
Precisely, they have proved that there exists a constant $c = c(\rho)$ such that $\hat{r_1} \leq c$,
that is, the largest neighborhood to be considered during the iteration of the algorithm is bounded by a constant.
Putting everything together, we conclude the proof.
\qed
\end{proof}

\section{Hardness Results}\label{hardness}

\subsection{Streaming Lower Bound for LDS}
In this section, we study the LDS problem in the streaming model. In the streaming model, the vertex set is fixed, and the edges appear one-by-one in sequential order. We are required to decide whether we remember this edge or forget at this particular time step. We show that in order to solve the LDS problem, one needs to essentially store the entire graph. We prove the following theorem by using similar approach used in \cite{chitnis2014parameterized}.

\begin{theorem}
Any randomized (by randomized algorithm we mean that the algorithm should succeed
with probability $\geq \frac{2}{3}$) streaming algorithm for LDS problem on $n$-vertex graphs 
requires $\Omega(n^2)$ space.
\label{thm:lower-bound-lds}
\end{theorem}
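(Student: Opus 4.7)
The plan is to prove the lower bound by a reduction from the one-way randomized communication complexity of the \textsc{Index} problem. Recall that in \textsc{Index}, Alice holds a string $x\in\{0,1\}^{N}$ with $N=\Theta(n^{2})$ and Bob holds a query index $q\in[N]$; any one-way randomized protocol in which Bob outputs $x_q$ with probability at least $2/3$ requires Alice to send $\Omega(N)=\Omega(n^{2})$ bits. I will design a graph $G=G(x,q)$ on $O(n)$ vertices whose edges split naturally into an ``Alice part'' $E_A(x)$ depending only on $x$ and a small ``Bob part'' $E_B(q)$ depending only on $q$, with the property that the size (or feasibility) of an optimum liar's dominating set of $G$ reveals the bit $x_q$. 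Any streaming algorithm for LDS using space $s$ then yields an \textsc{Index} protocol of communication $s$: Alice runs the algorithm on $E_A(x)$, transmits its $s$-bit memory state to Bob, who continues the simulation on $E_B(q)$ and reads off $x_q$ from the reported optimum. The $\Omega(n^{2})$ lower bound for \textsc{Index} then forces $s=\Omega(n^{2})$.

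For the graph construction I would view Alice's input as a matrix $M\in\{0,1\}^{n\times n}$ and Bob's query as a pair $(i^{\star},j^{\star})$. Introduce two vertex classes $A=\{a_1,\dots,a_n\}$ and $B=\{b_1,\dots,b_n\}$, and for every $(i,j)$ with $M_{ij}=1$ insert the bipartite edge $(a_i,b_j)$ into Alice's part of the stream. To each $a_i$ and each $b_j$ attach a constant-size ``forcing'' gadget of private pendants (present already in $E_A$ but not depending on $M$) whose role is to compel $a_i\in L$ and $b_j\in L$ in every liar's dominating set $L$: for instance, two degree-one private neighbors of $a_i$ force $a_i\in L$ via the condition $|N[v]\cap L|\geq 2$, and a small additional pendant structure discharges the pair condition locally without imposing further constraints on the bipartite edges. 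Thus the ``bulk'' of any LDS is the fixed set $A\cup B$ together with the forced pendants, of size $\Theta(n)$ and independent of $M$.

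Bob then appends an $O(1)$-size ``index gadget'' attached only to $a_{i^{\star}}$ and $b_{j^{\star}}$, engineered so that its LDS constraints can be met using exactly one extra vertex if and only if the bipartite edge $(a_{i^{\star}},b_{j^{\star}})$ is present (equivalently $M_{i^{\star}j^{\star}}=x_q=1$), and require at least two extra vertices otherwise. Consequently the optimum LDS size of $G$ is a fixed constant plus $1-x_q$, and Bob recovers $x_q$ exactly from the reported optimum. A standard constant-factor success amplification turns a streaming algorithm succeeding with probability $\geq 2/3$ into a one-way \textsc{Index} protocol with the same success probability and communication $s$, so $s=\Omega(n^{2})$.

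The main technical obstacle is the global pair condition $|(N[u]\cup N[v])\cap L|\geq 3$, which is easy to satisfy accidentally through Alice's bipartite edges and can destroy the one-vertex gap between the two cases. The forcing pendants must be designed so that every pair of vertices of $G$ other than the queried pair $(a_{i^{\star}},b_{j^{\star}})$ already has at least three LDS members in the union of its closed neighborhoods coming purely from the forced vertices, so that the feasibility/cost of precisely \emph{one} pair, namely the queried one, depends on the edges Alice streamed. A careful, local, constant-degree design of this gadget (e.g., pendants arranged in small ``triple'' blocks per vertex of $A\cup B$) suffices and keeps $|V(G)|=O(n)$. With this in place the reduction is completed as above, yielding the claimed $\Omega(n^{2})$ space lower bound; the same blueprint, with the forcing gadget adjusted to demand $k$-fold coverage in place of the LDS conditions, gives the analogous bound for the $k$-tuple dominating set problem.
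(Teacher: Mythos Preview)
Your high-level framework—one-way reduction from \textsc{Index}, encoding Alice's string as a bipartite graph on $A\cup B$, having Bob append a small gadget depending on the query—is exactly the paper's approach. The gap is in the specific construction. By attaching private pendants that force \emph{every} $a_i$ and $b_j$ (and their pendants) into $L$, you make the bipartite edges irrelevant to the LDS cost. Once $A\cup B$ and all pendants are forced, every vertex already has at least two $L$-members in its closed neighborhood, and $N[a_{i^\star}]\cap L$ contains $a_{i^\star}$, its pendant(s), \emph{and every} $b_j$ with $M_{i^\star j}=1$. A constant-size gadget attached only to $a_{i^\star}$ and $b_{j^\star}$ can interact with the rest of the graph solely through $N[a_{i^\star}]$ and $N[b_{j^\star}]$; but those sets already meet $L$ in at least two (typically many) places regardless of the single queried edge, so no pair or coverage constraint you impose on gadget vertices can distinguish the presence of the specific edge $(a_{i^\star},b_{j^\star})$ from the presence of some other edge out of $a_{i^\star}$ or $b_{j^\star}$. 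Your sentence ``the feasibility/cost of precisely one pair \ldots\ depends on the edges Alice streamed'' is therefore not achievable with this forcing scheme.

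The paper avoids this by \emph{not} forcing $V\cup W$ into $L$ at all. Bob adds $O(1)$ new vertices: two universal vertices $u,y$ (adjacent to all of $V\cup W\cup\{a,b\}$) with private pendants $u',y'$, a vertex $z$ with $N[z]=\{u,y,z\}$, and a vertex $a$ adjacent to every vertex of $V\cup W$ \emph{except} $v_I$ and $w_J$. The forced part of any LDS is then just $\{u,u',y,y',a\}$, and the only tight pair constraints are $(v_I,z)$ and $(w_J,z)$, each currently seeing only $\{u,y\}$. A single extra vertex settles both simultaneously iff it lies in $N[v_I]\cap N[w_J]$, i.e., iff the edge $v_I w_J$ exists; otherwise two extra vertices are needed. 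This yields an optimum of $6$ versus $7$, which also gives the $7/6$ inapproximability corollary that your $\Theta(n)$-size optimum could not. The fix to your argument is thus to drop the per-vertex forcing gadgets and instead let Bob create the asymmetry between $(v_I,w_J)$ and the rest via the edges he adds.
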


\begin{proof}
We reduce from the \textsc{Index} problem in communication complexity. 
In the \textsc{Index} problem, Alice has a string $X\in \{0,1\}^N$ given by
$x_{1}\ldots x_N$ and Bob has an index $i\in [N]$. Now, Bob wants to find $x_i$, i.e., the $i^{th}$ bit of $X$.
There is a lower bound of $\Omega(N)$ bits in the one-way randomized communication model for Bob to compute $x_i$ (for any $i\in [N]$)~\cite{nisan}.
Let us consider an instance of the \textsc{Index} problem where $N$ is a perfect
square, and let $r=\sqrt{N}$. Let us fix any bijection from $[N]\rightarrow [r] \times [r]$. We interpret the bit string as an adjacency matrix of a
bipartite graph with $r$ vertices on each side, where $V, W$ are two 
partites. 

Now, we construct an instance $G_X$ of the LDS. Assume that, Alice has an algorithm that solves the LDS problem by using $f(r)$ bits.
Firstly, we insert the edges corresponding to the edge interpretation of
$X$ between nodes $v_i$ and $w_j$: for each $i,j \in [k]$, 
Alice adds the edge $(v_i, w_j)$ if the corresponding entry in $X$ is $1$, and sends the memory contents of her algorithm to Bob by using $f(r)$ bits. Now, Bob has the index $i \in [N]$ that he interprets as $(I,J)$, under the same bijection $\phi:[N]\rightarrow [r] \times [r]$.
He receives the memory contents of the algorithm, and does the following --- (1) adds two vertices $a$ and $b$, and an edge $a-b$,
(2) adds an edge from each vertex of $V\setminus v_I$ to $a$,
(3) adds an edge from each vertex of $W\setminus w_J$ to $a$,
(4) adds five vertices $\{u,y,u',y',z\}$ and edges $u-u', y-y', u-z$ and $y-z$, (5) adds an edge from each vertex of $V\cup W\cup \{a,b\}$ to each vertex from $\{u,y\}$.

Let $D$ be a minimum LDS of $G_X$. Notice that, firstly, $D$ needs to be a $2$-tuple DS of $G_X$. 
Since $u'$ has only $2$ neighbors in $G_X$, it follows that $\{u,u'\}\subseteq D$. 
Similarly, $\{y,y'\}\subseteq D$. 
Note, $z$ also has only two neighbors in $G_X$.
Therefore, $N[z]\cup N[b] = \{u,y,a,b\}$. 
Since we should have $|(N[z]\cup N[b])\cap D| \geq 3$,
at least one of $a/b$ must be in $D$. 
Since $N[b]\subseteq N[a]$, w.l.o.g., we can assume that $a\in D$. 
Therefore, we conclude that $\{u,u',y,y',a\}\subseteq D$.

Next, we show that finding the minimum value of a LDS of $G_X$
allows us to solve the corresponding instance $X$ of \textsc{Index}.

\begin{lemma}
$x_i = 1$ implies that the minimum size of a liar's dominating set of $G_X$ is $6$.
\label{thm:lds-1-implies-6}
\end{lemma}

\begin{proof}
 Suppose that $x_i = 1$, i.e., $v_{I}-w_J$ is an edge in $G_X$. We now claim that $D:=\{u,u',y,y',a\}\cup v_I$ is a LDS of $G_X$.

First, we check that $D$ is a $2$-tuple DS of $G_X$;
\begin{itemize}
  \item For each vertex in $\lambda\in G_X\setminus \{u,u',y,y',z\}$ we have $(N[\lambda]\cap D)\supseteq \{u,y\}$
  \item $(N[u]\cap D)\supseteq \{u,u'\}$
  \item $(N[y]\cap D)\supseteq \{y,y'\}$
  \item $(N[z]\cap D)= \{u,y\}$
  \item $(N[u']\cap D)= \{u,u'\}$
    \item $(N[y']\cap D)= \{y,y'\}$
\end{itemize}

Now we check the second condition. 
Let $T=G_X\setminus \{u,u',y,y',z\}$, and $T' = G_X\setminus T$
\begin{itemize}
\item For each $\lambda\in T\setminus \{v_I, w_J\}$ and each $\delta\in T'$ we have $(N[\lambda]\cup N[\delta])\cap D \supseteq \{a,u,y\} $
\item For each $\delta\in T'$ we have $(N[v_I]\cup N[\delta])\cap D \supseteq \{v_I,u,y\} $
\item For each $\delta\in T'$ we have $(N[w_J]\cup N[\delta])\cap D \supseteq \{v_I,u,y\} $
\item Now we consider pairs where both vertices are from $T'$. By symmetry, we only have to consider following choices
        \begin{itemize}
        \item $(N[u']\cup N[u])\cap D = \{u,u',y\}$
        \item $(N[u']\cup N[z])\cap D = \{u,u',y\}$
        \item $(N[u']\cup N[y])\cap D = \{u,u',y,y'\}$
        \item $(N[u']\cup N[y'])\cap D = \{u,u',y,y'\}$
        \end{itemize}
\item Now we consider pairs where both vertices are from $T$. By symmetry, we only have to consider following choices
        \begin{itemize}
        \item For each $\lambda\in V\setminus v_I \cup W\setminus w_J$ we have $(N[\lambda]\cup N[b])\cap D \supseteq \{u,y,a\}$ and  $(N[\lambda]\cup N[a])\cap D \supseteq \{u,y,a\}$
        \item $(N[v_I]\cup N[b])\cap D \supseteq \{u,y,a\}$
        \item $(N[v_I]\cup N[a])\cap D \supseteq \{u,y,a\}$
        \item $(N[w_J]\cup N[b])\cap D \supseteq \{u,y,a\}$
        \item $(N[w_J]\cup N[a])\cap D \supseteq \{u,y,a\}$
        \item $(N[w_J]\cup N[v_I])\cap D \supseteq \{v_I,y,a\}$
        \item For each $\gamma\in V\setminus v_I$ we have $(N[w_J]\cup N[\gamma])\cap D \supseteq \{v_I,y,a,u\}$
        \item For each $\gamma\in W\setminus w_J$ we have $(N[v_I]\cup N[\gamma])\cap D \supseteq \{v_I,y,a,u\}$
        \item For each $\gamma\in V\setminus v_I$ and $\gamma'\in W\setminus w_J$ we have $(N[\gamma]\cup N[\gamma'])\cap D \supseteq \{y,a,u\}$
        \end{itemize}
\end{itemize}
Hence, it follows that $D$ is indeed a LDS of $G_X$ of size 6.
~\qedsymbol
\end{proof}

\begin{lemma}
$x_i = 0$ implies that the minimum size of a LDS of $G_X$ is $\geq 7$.
\label{lemma_lds7}
\end{lemma}
\begin{proof}
 Now suppose that $x_i = 0$, i.e., $v_{I}$ and $w_J$ do not have an edge between them in $G_X$.
Let $D'$ be a minimum LDS of $G_X$. We have already seen above that $\{u,u',y,y',a\}\subseteq D$.
Consider the pair $(v_I, z)$. Currently, we have that $(N[v_I]\cup N[z])\cap \{u,u',y,y',a\} = \{u,y\}$.
Hence, $D'$ must contain a vertex, say $\mu\in N[v_I]\setminus \{u,y\}$. Consider the pair $(w_J, z)$.
Currently, we have that $(N[w_J]\cup N[z])\cap \{u,u',y,y',a\} = \{u,y\}$.
Hence, $D'$ must contain a vertex, say $\mu'\in N[w_J]\setminus \{u,y\}$.
Since $v_I$ and $w_J$ do not form an edge, we have that $\mu\neq \mu'$. Hence, $|D'|\geq  7$.
~\qedsymbol
\end{proof}

Thus, by checking whether the value of a minimum LDS on the instance
$G_X$ is $6$ or $7$, Bob can determine the index
$x_i$.
The total communication between Alice and Bob was $O(f(r))$ bits,
and hence we can solve the \textsc{Index} problem in $f(r)$ bits. Recall that the
lower bound for the \textsc{Index} problem is $\Omega(N)=\Omega(r^2)$.
Note that $|G_X|=n=2r+5=O(r)$, and hence $\Omega(r^2)=\Omega(n^2)$.
\qed
\end{proof}

\begin{corollary}
Let $\epsilon>0$ be a constant. Any (randomized) streaming algorithm
that achieves a $(\frac{7}{6}-\epsilon)$-approximation for a liar's dominating set requires $\Omega(n^2)$ space.
\label{corollary_str1}
\end{corollary}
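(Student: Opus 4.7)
The plan is to boost the exact lower bound of Theorem~\ref{thm:lower-bound-lds} into an inapproximability statement by exploiting the integer gap between the two cases exhibited in Lemmas~\ref{thm:lds-1-implies-6} and~\ref{lemma_lds7}. Concretely, I would argue by contradiction: assume a randomized streaming algorithm $\mathcal{A}$ computes, with success probability at least $2/3$, a liar's dominating set of size at most $(\tfrac{7}{6}-\epsilon)\cdot|D_{\mathrm{OPT}}|$ using only $o(n^2)$ bits of space, and then use $\mathcal{A}$ to solve \textsc{Index} within the same space, contradicting the $\Omega(N)=\Omega(r^2)=\Omega(n^2)$ bound cited in the proof of Theorem~\ref{thm:lower-bound-lds}.

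The reduction reuses the graph $G_X$ built inside that proof verbatim. Alice streams her half of the edges of $G_X$ and sends the memory state of $\mathcal{A}$ to Bob; Bob appends his gadget edges (the vertices $a,b,u,u',y,y',z$ and all associated edges) and lets $\mathcal{A}$ finish on the completed stream. Let $\mathsf{val}$ denote the size of the set produced by $\mathcal{A}$. By Lemma~\ref{thm:lds-1-implies-6}, if $x_i=1$ then $|D_{\mathrm{OPT}}|=6$ and so $\mathsf{val}\leq (\tfrac{7}{6}-\epsilon)\cdot 6 = 7-6\epsilon$, which is strictly less than $7$; by Lemma~\ref{lemma_lds7}, if $x_i=0$ then $\mathsf{val}\geq |D_{\mathrm{OPT}}|\geq 7$. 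Since $\mathsf{val}$ is an integer, Bob simply outputs $x_i=1$ whenever $\mathsf{val}\leq 6$ and $x_i=0$ otherwise; this succeeds with probability at least $2/3$ and uses $o(n^2)=o(r^2)$ bits of one-way communication, contradicting the $\Omega(r^2)$ communication lower bound for \textsc{Index}~\cite{nisan}.

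There is no real obstacle here beyond being careful about two points. First, the approximation ratio is multiplicative on the optimum of the constructed instance, so the factor $\tfrac{7}{6}$ is exactly the tight gap $7/6$ forced by the two lemmas; any constant $\epsilon>0$ strictly separates $(\tfrac{7}{6}-\epsilon)\cdot 6$ from $7$, which is what the integrality of set sizes needs. Second, to preserve the $\Omega(r^2)$ bound from \textsc{Index}, one should verify that Bob's augmentation of the stream can be carried out with only $O(\log n)$ additional memory on top of $\mathcal{A}$'s state, which is immediate because Bob's gadget has constant size and the index $i$ takes $O(\log N)$ bits. Together these observations yield the stated $\Omega(n^2)$ space lower bound for any $(\tfrac{7}{6}-\epsilon)$-approximation.
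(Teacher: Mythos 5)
Your argument is correct and is essentially the paper's own proof: both rest on the observation that a $(\tfrac{7}{6}-\epsilon)$-approximation must return a value below $7$ when the optimum is $6$ (since $6\cdot(\tfrac{7}{6}-\epsilon)=7-6\epsilon<7$) and at least $7$ when the optimum is $7$, so it distinguishes the two cases of Theorem~\ref{thm:lower-bound-lds} and inherits its $\Omega(n^2)$ bound. Your write-up merely makes explicit the reduction mechanics and the integrality step that the paper leaves implicit.
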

\begin{proof}
Theorem~\ref{thm:lower-bound-lds}
shows that distinguishing between whether the minimum value of
the LDS is 6 or 7 requires $\Omega(n^2)$ bits. The claim follows since $6\cdot (\frac{7}{6}-\epsilon) <7$
\end{proof}

\subsection{Streaming Lower Bounds for $\boldmath{k}$-DS}\label{streaming}

\begin{theorem}
For any $k=O(1)$, any randomized (by randomized algorithm we mean that the algorithm
should succeed with probability $\geq \frac{2}{3}$) streaming algorithm
for the $k$-tuple dominating set problem on $n$-vertex graphs requires $\Omega(n^2)$ space.
\label{thm:lower-bound-kds}
\end{theorem}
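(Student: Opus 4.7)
The plan is to mimic the communication-complexity reduction from \textsc{Index} used in Theorem~\ref{thm:lower-bound-lds}, but to replace the LDS gadget by one whose minimum $k$-tuple dominating set changes by exactly one as the queried bit flips. Alice and Bob again interpret Alice's $N = r^2$-bit string $X$ as a bipartite graph on $V \cup W$ with $|V| = |W| = r$: Alice streams the edge $(v_i, w_j)$ whenever the corresponding bit of $X$ equals $1$ and then sends the algorithm's memory to Bob, who knows the target index $(I,J) = \phi(i)$. The challenge is that the $k$-DS constraint is purely local (one condition per vertex), so the pairwise trick from the LDS reduction is unavailable and the dependence on the edge $v_I w_J$ must be baked into the degree profile of $v_I$ and $w_J$ themselves.

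I would have Bob add $k+2$ fresh vertices: $k-1$ universal helpers $h_1,\ldots,h_{k-1}$, each made adjacent to every vertex of $V \cup W$ and pairwise adjacent to each other; a near-universal vertex $a$ adjacent to all $h_j$'s and to every vertex of $(V \setminus \{v_I\}) \cup (W \setminus \{w_J\})$ (hence not to $v_I$ or $w_J$); and two forcing pendants $p$ with $N[p] = \{p, h_1,\ldots,h_{k-1}\}$ and $q$ with $N[q] = \{q, a, h_1,\ldots,h_{k-2}\}$. Because $|N[p]| = |N[q]| = k$, the $k$-tuple constraints at $p$ and $q$ together force the ``backbone'' $B := \{p, q, a, h_1,\ldots,h_{k-1}\}$ of size $k+2$ into every feasible $D$. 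For $k=2$ the same gadget works with the understanding that the $h_1,\ldots,h_{k-2}$ slice in $N[q]$ is empty; for $k=1$ the result is the classical dominating-set streaming lower bound.

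With $B \subseteq D$, every vertex of $(V \cup W) \setminus \{v_I, w_J\}$ already has $\{h_1,\ldots,h_{k-1},a\}$ in $D \cap N[\cdot]$ and is thus $k$-dominated. The only remaining obligations are at $v_I$ and $w_J$, each of which currently has only the $k-1$ helpers in $D$ and therefore needs exactly one extra dominator. The candidate set at $v_I$ is $A_v := \{v_I\} \cup N_{G_X}(v_I) \subseteq \{v_I\} \cup W$ and at $w_J$ it is $A_w := \{w_J\} \cup N_{G_X}(w_J) \subseteq \{w_J\} \cup V$; bipartiteness gives $A_v \cap A_w \neq \emptyset$ iff $v_I w_J \in E(G_X)$. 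Hence, if $x_i = 1$ then picking $v_I$ alone handles both obligations and $|D| = k+3$, whereas if $x_i = 0$ the two candidate sets are disjoint so two distinct extra vertices must be spent, yielding $|D| \geq k+4$.

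Distinguishing $k+3$ from $k+4$ therefore solves \textsc{Index}, so any randomized streaming algorithm for $k$-DS uses $\Omega(N) = \Omega(r^2) = \Omega(n^2)$ bits, since $n = 2r + k + 2 = O(r)$ for constant $k$. The main obstacle I anticipate is carefully certifying the backbone forcing and ruling out ``cheap'' $x_i = 0$ solutions, including the degenerate case $N_{G_X}(v_I) = \emptyset$ or $N_{G_X}(w_J) = \emptyset$ (which can only arise when $x_i = 0$, and in fact leaves both extra picks unavoidable since $v_I$ is then even fully \emph{forced} into $D$ while $w_J$ still demands a further vertex from $\{w_J\} \cup N_{G_X}(w_J)$). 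I would discharge this by exhibiting an explicit feasible $D$ of size $k+3$ in the $x_i = 1$ case, and in the $x_i = 0$ case by the direct disjointness-based lower bound above, ending with the reduction bookkeeping and a corollary analogous to Corollary~\ref{corollary_str1} that rules out constant-factor approximation streaming algorithms as well.
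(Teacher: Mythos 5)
Your proposal is correct and follows essentially the same route as the paper: a one-way communication reduction from \textsc{Index} in which Alice streams the bit-string as a bipartite graph and Bob attaches a constant-size gadget that pre-$k$-dominates everything except $v_I$ and $w_J$, leaving each of those exactly one dominator short, so that by bipartiteness the optimum is one smaller precisely when $v_Iw_J$ is an edge. Your gadget ($k-1$ mutually adjacent universal helpers, a near-universal vertex $a$, and two degree-$(k-1)$ forcing pendants, giving thresholds $k+3$ versus $k+4$) is only a bookkeeping variant of the paper's ($k$ apex vertices $A$ plus one pendant $b$, giving $k+1$ versus $k+2$) --- if anything yours is the more carefully certified of the two, since the pendants of closed-neighborhood size exactly $k$ force the whole backbone outright and the mutual adjacencies guarantee the backbone vertices themselves are $k$-dominated.
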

\begin{proof}

%

We reduce from the \textsc{Index} problem. We construct an instance $G_X$
of $k$-tuple dominating set.
Assume that Alice has an algorithm that solves the $k$-tuple dominating set problem using $f(r)$ bits.
First, we insert the edges corresponding to the edge interpretation of
$X$ between nodes $v_i$ and $w_j$:
  for each $i, j \in [k]$, Alice adds the edge $(v_i, w_j)$ if
  the corresponding entry in $X$ is 1.
Then, Alice sends the memory contents of her algorithm to Bob, using
$f(r)$ bits.

Now, Bob has the index $i \in [N]$, which he interprets as $(I,J)$ under
the same bijection $\phi:[N]\rightarrow [r] \times [r]$.
Bob receives the memory contents of the algorithm, and does the following --- (1) adds (k+1) vertices $A=\{a_1, a_2, \ldots, a_k\}$ and $b$,
(2) adds edges $\{a_{i}-b\ :\ 1\leq i\leq k\}$,
(3) adds an edge from each vertex of $V\setminus (v_I)$ to each vertex of $A$, (4) adds an edge from each vertex of $V\setminus (w_J)$ to each vertex of $A$, (5) adds an edge $v_I$ to each vertex of $A\setminus a_k$,
(6) adds an edge from $w_J$ to each vertex of $A\setminus a_k$.


Now we show finding the minimum value of a $k$-tuple dominating set of $G_X$
allows us to solve the corresponding instance $X$ of \textsc{Index}.

\begin{lemma}
The minimum size of a $k$-tuple dominating set of $G_X$ is $k+1$ if and only if
$x_i = 1$.
\label{thm:lower-bound-kds-insertion}
\end{lemma}


\begin{proof}
Let $D$ be a minimum $k$-DS of $G_X$. Since $b$ has only $k$ neighbors in $G_X$, we can assume, w.l.o.g., $A\subseteq D$. Observe, every vertex in $G_X\setminus \{v_I , w_J\}$ already has $k$ neighbors in $A$. Both $v_I$ and $w_J$ have exactly $k-1$ neighbors in $A$. Hence, $|D|\geq k+1$.

Suppose that
$x_i = 1$, i.e., $v_{I}-w_J$ is an edge in $G_X$. We now claim that $A\cup v_I$ is a $k$-tuple dominating set of $G_X$.
We have observed above that each vertex in $G_X\setminus $ has $k$ neighbors in $A\subseteq D$.
So we just need to verify the condition for $v_I$ and $w_J$ now.
The claim follows since $N[v_I]\cap D = \{a_1, a_2, \ldots, a_{k-1}\}\cup v_I = N[w_J]\cap D$, and $|\{a_1, a_2, \ldots, a_{k-1}\}\cup v_I|=k$.

Now suppose that $x_i = 0$, i.e., $v_{I}$ and $w_J$ do not have an edge between them in $G_X$. Note that $A\cup v_I \cup w_J$ is indeed a $k$-tuple dominating set for $G_X$ of size $k+2$. We now claim that $G_X$ has no $k$-tuple dominating set of size $k+1$. Suppose to the contrary that there is a $k$-tuple dominating set $D'$ of $G_X$ of size $k+1$. Since $A$ has to be part of any minimum $k$-tuple dominating set, it follows that $D'=A\cup \beta$ for some vertex $\beta\in G_X \setminus A$. We now consider all choices for where the vertex $\beta$ can be chosen from (and derive a contradiction in each case):
\begin{itemize}
  \item \underline{$\beta=b$}: Then we have $|N[v_I]\cap D'|=k-1$
  \item \underline{$\beta\in V$}: Then we have $|N[v_I]\cap D'|=k-1$
  \item \underline{$\beta\in W$}: Then we have $|N[w_J]\cap D'|=k-1$
\end{itemize}
This completes the proof. 
~\qedsymbol
\end{proof}

Thus, by checking whether the value of minimum $k$-tuple dominating set on the instance
$G_X$ is $k+1$ or $k+2$, Bob can determine the index
$x_i$.
The total communication between Alice and Bob was $O(f(r))$ bits,
and hence we can solve the \textsc{Index} problem in $f(r)$ bits. Recall that the
lower bound for the \textsc{Index} problem is $\Omega(N)=\Omega(r^2)$. Note that $|G_X|=n=2r+k+1=O(r)$ since $k=O(1)$, and hence $\Omega(r^2)=\Omega(n^2)$.
\qed
\end{proof}

\begin{corollary}
Let $1>\epsilon>0$ be any constant. Any (randomized) streaming algorithm that approximates a $k$-tuple dominating set within a relative error of
$\epsilon$ requires $\Omega(n^2)$ space.
\label{corollary_str2}
\end{corollary}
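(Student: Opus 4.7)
The plan is to bootstrap Theorem~\ref{thm:lower-bound-kds} into an approximation lower bound by exploiting the hard gap instance. Theorem~\ref{thm:lower-bound-kds} constructs graphs $G_X$ for which the minimum $k$-tuple dominating set has size exactly $k+1$ when $x_i=1$ and exactly $k+2$ when $x_i=0$, and shows that distinguishing these two cases requires $\Omega(n^2)$ space. I would observe that any algorithm returning a value $\mathrm{ALG}$ with $\mathrm{OPT}\le \mathrm{ALG}\le (1+\epsilon)\cdot\mathrm{OPT}$ must satisfy $\mathrm{ALG}\le(1+\epsilon)(k+1)$ in the first case and $\mathrm{ALG}\ge k+2$ in the second. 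Whenever $(1+\epsilon)(k+1)<k+2$, equivalently $\epsilon<\tfrac{1}{k+1}$, comparing $\mathrm{ALG}$ to the threshold $k+2$ recovers the index bit $x_i$.

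Since $k=O(1)$, the quantity $\tfrac{1}{k+1}$ is itself a positive constant, so the argument applies for every sufficiently small constant $\epsilon>0$ in the stated range. Plugging such an approximation algorithm into the \textsc{Index}-to-$k$-DS reduction of Theorem~\ref{thm:lower-bound-kds} therefore yields a streaming protocol for \textsc{Index} using exactly the same memory footprint (Alice runs the streaming algorithm on her portion of the edges, transmits its state, and Bob finishes the stream and performs the single comparison). Combined with the $\Omega(N)=\Omega(r^2)=\Omega(n^2)$ communication lower bound for \textsc{Index}, this forces the approximation algorithm to use $\Omega(n^2)$ space. The success probability carries through transparently: a single deterministic threshold test against $k+2$ preserves the $\ge\tfrac{2}{3}$ guarantee of the underlying randomized approximation.

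The only subtle point, and the place I would verify carefully, is the interaction between the statement's "any constant $\epsilon$" and the fact that the reduction only works for $\epsilon<\tfrac{1}{k+1}$; the natural reading is that every constant $\epsilon$ smaller than the gap $\tfrac{1}{k+1}$ inherits the $\Omega(n^2)$ lower bound, which is exactly the content of the corollary since $k$ is treated as a constant. No new construction or combinatorial argument is needed beyond Theorem~\ref{thm:lower-bound-kds}; the entire work is absorbed by the $(k+1)$-vs-$(k+2)$ gap already present in the hard instance, so the main obstacle is purely bookkeeping of the constants rather than any new idea.
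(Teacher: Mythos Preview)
Your approach is the same as the paper's: both exploit the $(k+1)$-versus-$(k+2)$ gap from Theorem~\ref{thm:lower-bound-kds} (Lemma~\ref{thm:lower-bound-kds-insertion}) to argue that a sufficiently accurate approximation would distinguish the two cases and hence solve \textsc{Index}.

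The one place you diverge is exactly the ``subtle point'' you flag. You fix $k$ and then restrict to $\epsilon<\tfrac{1}{k+1}$, reading the corollary as a statement for each fixed $k$ and all small enough $\epsilon$. The paper instead reads the quantifiers the other way: given an arbitrary constant $\epsilon\in(0,1)$, it \emph{chooses} $k$ (essentially $k=\lceil 1/\epsilon\rceil$, written in the paper as ``choose $\epsilon=1/k$'') so that the relative gap $\tfrac{1}{k+2}<\epsilon$, and then invokes the lower bound for that particular constant $k$. This is what makes the phrase ``any constant $\epsilon$'' literally true without the extra caveat you add. Either reading yields a valid statement, but if you want to match the corollary as written, the missing move is simply to let $k$ depend on $\epsilon$ rather than the reverse.
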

\begin{proof}
Choose $\epsilon=\frac{1}{k}$. Theorem~\ref{thm:lower-bound-kds-insertion}
shows that the relative error is at most $\frac{1}{k+2}$, which is less than $\epsilon$.
Hence finding an approximation within $\epsilon$ relative error amounts to
finding the exact value of the $k$-tuple dominating set. Hence, the claim follows from the lower bound of $\Omega(n^2)$
of Theorem~\ref{thm:lower-bound-kds-insertion}.
\end{proof}

\subsection{W-Hardness Results for LDS}\label{hardness_param}
The LDS problem was shown to be NP-complete on general graphs by Slater in \cite{PJS09},
where the problem was introduced.
Later this problem was considered in \cite{PP13,RS09}, and was shown to be
NP-complete on bipartite graphs, split graphs and planar graphs.
In \cite{BGP17}, Bishnu et al. proved that the LDS problem on planar graphs admits
a linear kernel, and is W[2]-hard while considered on general graphs.
We study the LDS problem on bipartite graphs and show that it is 
W[2]-hard. 
Our approach is inspired by the W[2]-hardness results of \cite{RS06}.

\begin{theorem}\label{w2hard_bipartite}
Liar's dominating set on bipartite graphs is W[2]-hard.
\label{thm:w2-one}
\end{theorem}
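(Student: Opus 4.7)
My plan is to prove Theorem~\ref{w2hard_bipartite} by a parameterized reduction from the canonical W[2]-hard problem \textsc{Dominating Set}, in the spirit of the W[2]-hardness arguments in~\cite{RS06}. Given an instance $(G, k)$ with $G = (V_G, E_G)$ and $|V_G| = n$, I would construct in polynomial time a bipartite graph $G' = (A \cup B, E')$ and a new parameter $k' = k + c$, for an absolute constant $c$ determined by a small gadget, such that $G$ has a dominating set of size at most $k$ iff $G'$ has a liar's dominating set of size at most $k'$.

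The construction is a bipartite ``double cover'' of $G$ plus a forcing gadget. The core: for each $v_i \in V_G$ introduce $a_i \in A$ and $b_i \in B$ and include the edge $a_i b_j$ whenever $v_j \in N_G[v_i]$, so that $D \subseteq V_G$ dominates $G$ iff $\{a_i : v_i \in D\}$ dominates every $b_j$ in $G'$. The gadget $H$ consists of a few ``anchor'' vertices placed in each part, each equipped with a degree-one pendant. The first LDS clause forces both the pendant and its unique neighbor into every LDS (a degree-one vertex has exactly two closed neighbors, both of which must lie in~$L$), so $V(H) \subseteq L$ is automatic and contributes the additive constant $c$ of the parameter. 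The anchors then serve as ``universal'' dominators that kill every LDS constraint on $G'$ except the ones encoding domination in $G$.

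The verification has three standard steps: (i) the pendant argument forcing $V(H)\subseteq L$ in any LDS $L$ of $G'$; (ii) the forward direction---given a dominating set $D$ of $G$ of size $k$, take $L := V(H) \cup \{a_i : v_i \in D\}$ and verify both LDS clauses by case analysis on the vertex/pair type (the anchors double-dominate each closed neighborhood on the gadget side, each $b_j$ gets its second dominator from the $D$-selection because $D$ dominates $v_j$ in $G$, and each pair accumulates its third witness from the $D$-selection or from an extra anchor); and (iii) the reverse direction---given an LDS $L$ of size $\le k+c$, extract $D := \{v_i : a_i \in L\setminus V(H)\}$ after replacing any $b_i \in L$ by some $a_j$ with $v_i \in N_G[v_j]$ via a short rerouting argument that does not increase the size, and verify that $D$ dominates $V_G$ using the two-dominator condition on each $b_j$ in $G'$. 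The main obstacle I expect is calibrating $H$ so that the triple-pair clause of LDS---especially for same-side pairs $\{a_i,a_{i'}\}$ and $\{b_j,b_{j'}\}$, whose joint closed neighborhood lies entirely in the opposite part because $G'$ is bipartite---is already satisfied by the anchors alone without introducing spurious constraints beyond what encodes domination in $G$. Striking this balance with a constant-size gadget (so that $k' - k$ does not depend on $n$ or $k$) is the delicate step; once it is done, the reduction is polynomial-time and the parameterized blow-up is additive, so W[2]-hardness transfers immediately.
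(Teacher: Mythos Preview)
Your plan is essentially the paper's proof: a parameterized reduction from \textsc{Dominating Set} via the bipartite double cover $a_ib_j\in E'$ iff $v_j\in N_G[v_i]$, together with a constant-size gadget of universal anchors forced into any LDS by attached pendants, yielding $k'=k+c$ (the paper takes $c=8$, with two anchors plus two pendants on each side). The one point where the paper is more explicit than your sketch is \emph{which} LDS clause carries the domination constraint: with two universal anchors per side the double-domination clause is already met everywhere by the anchors, so it is the triple-pair clause on pairs in $V_2$ that forces an $a_i$-selection hitting each $N_G[v_j]$; your reverse-direction line ``using the two-dominator condition on each $b_j$'' should accordingly be replaced by the pair condition once you fix the gadget.
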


\begin{proof}
We prove this by giving a parameterized reduction from the dominating set problem in general undirected graphs.
Let $(G=(V,E), k)$ be an instance of the dominating set, where $k$ denotes the size of the dominating set.
We construct a bipartite graph $G'=(V', E')$ from $G=(V,E)$.
First, we create two copies of $V$, namely $V_1=\{u_1|u\in V\}$ and $V_2=\{u_2|u\in V\}$.
Next, we introduce two extra vertices $z_1,z_2$ in $V_1$, and two extra vertices $z_{1}',z_{2}'$ in $V_2$.
Furthermore, we introduce two special vertices $s_{z'_1}, s_{z'_2}$ in $V_1$ and two special vertices $s_{z_1}, s_{z_2}$ in $V_2$.
The entire vertex set $V'$ is $V_1\cup V_2$, where $V_1=\{u_1|u\in V\}\cup \{z_1,z_2, s_{z'_1}, s_{z'_2}\}$ and 
$V_2=\{u_2|u\in V\}\cup\{z_{1}',z_{2}', s_{z_1}, s_{z_2}\}$.
Now, if there is an edge $(u,v)\in E$, then we draw the edges $(u_1,v_2)$ and $(v_1,u_2)$.
We draw the edges of the form $(u_1,u_2)$ in $G'$ for every vertex $u\in V$.
Then, we add edges from every vertex in $V_1\setminus \{z_1,z_2, s_{z'_1}, s_{z'_2}\}$ to $z_{1}',z_{2}'$,
and the edges from every vertex in $V_1\setminus \{z'_1,z'_2, s_{z_1}, s_{z_2}\}$ to $z_{1},z_{2}$.
Finally we add the edges $(z_1,z'_1), (z_2,z'_2)$ and $(z_1, s_{z_1}), (z_2, s_{z_2}), (z'_1, s_{z_1}), (z'_2, s_{z'_2})$.
This completes the construction (see Figure~\ref{fig:w2hard_bipartite}).

\begin{figure}[h]
 \centering
 \includegraphics[scale=0.5]{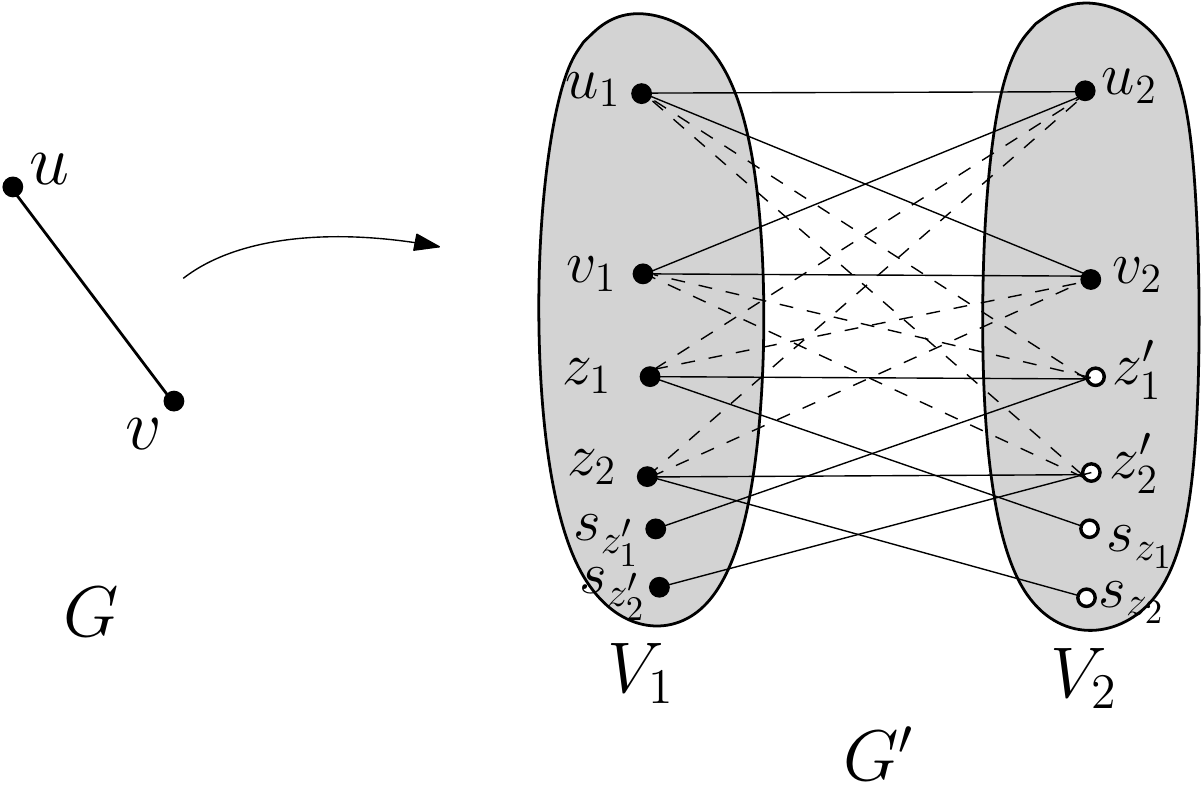}\vspace{-0.1in}
 \caption{Construction of $G'$ from $G$ (Illustration of Theorem \ref{w2hard_bipartite}).} \vspace{-0.1in}
 \label{fig:w2hard_bipartite}
 \end{figure}

We show that $G$ has a dominating set of size $k$ if and only if $G'$ has a LDS of size $k+8$.
Let $D$ denote the dominating set of the given graph $G$.
We claim that $D'=\{u_1|u\in D\}\cup\{z_1,z_2, s_{z'_1}, s_{z'_2}\} \cup \{z_{1}',z_{2}', s_{z_1}, s_{z_2}\}$ 
is a LDS of $G'$.
Note that for any vertex $v\in V'$, $|N_{G'}[v]\cap D'|\geq 2$, since 
$\{z_1,z_2,z_{1}',z_{2}'\}$ is in $D'$.
This fulfills the first condition of the LDS.
Now, for every pair of vertices, $u,v\in V'$, we show that $|(N_{G'}[u]\cup N_{G'}[v])\cap D'|\geq 3$.
If $u,v\in V_1$, we know $|(N_{G'}[u]\cup N_{G'}[v])\cap D'|\geq 2$ due to $z'_1,z'_2$.
Now, in the dominating set at least one additional vertex dominates them. 
Thus, $|(N_{G'}[u]\cup N_{G'}[v])\cap D'|\geq 3$.
Similarly, when $u,v\in V_2$ or $u\in V_1$ and $v\in V_2$.
This fulfills the second condition of the LDS.

Conversely, let $D'$ be a LDS in $G'$.
Note that, $\{z_1,z_2,z'_1,z'_2\}$ are always part of $D'$, since $z_1,z_2$ are the only neighbors of $s_{z_1}, s_{z_2}$ and
$z'_1,z'_2$ are the only neighbors of $s_{z'_1}, s_{z'_2}$.
These special vertices are taken in the construction to enforce $\{z_1,z_2,z'_1,z'_2\}$ to be in $D'$.
Now we know, for any pair of vertices $p,q$, $|(N_{G'}[p]\cup N_{G'}[q])\cap D'|\geq 3$.
This implies $p,q$ is dominated by at least one vertex or one of them is picked,
except $\{z_1,z_2,z'_1,z'_2\}$.
Otherwise, $|(N_{G'}[p]\cup N_{G'}[q])\cap D'|< 3$.
This violates the second condition of LDS.
Now, when $p,q$ are both part of the same edge in $G$ (say $u_2,v_2 \in V_2$ ; see Figure~\ref{fig:w2hard_bipartite}), 
we need at least one vertex from $\{u_1,v_1,u_2,v_2\}$ in $D'$.
This means that for every vertex $v\in V$, $|N_{G}[v]\cap D|\geq 1$.
Thus, $D$ is a dominating set of $G$ where the cardinality of $D$ is at most $k$.
\qed
\end{proof}

\subsection*{Acknowledgements}
The authors would like to thank Rajesh Chitnis and M. S. Ramanujan 
for interesting discussions during various stages of the research.

\bibliographystyle{abbrv}
\bibliography{ref_lds}
\end{document}